\theoremstyle{plain}
\newtheorem{mythe}{Theorem}
\theoremstyle{remark}
\newtheorem{mylem}{Lemma}
\theoremstyle{plain}
\theoremstyle{remark}
\newtheorem{mypro}{Proposition}
\theoremstyle{plain}
\theoremstyle{remark}
\newtheorem{myrem}{Remark}
\theoremstyle{remark}
\theoremstyle{remark}
\theoremstyle{remark}
\theoremstyle{remark}
\def\UrlSpecials{\do\~{\kern -.15em\lower .7ex\hbox{~}\kern .04em}} \catcode`~=13
\newcommand{\calC}{\mathcal{C}}
\newcommand{\calN}{\mathcal{N}}
\newcommand{\calQ}{\mathcal{Q}}
\newcommand{\bbE}{\mathbb{E}}
\DeclareMathAlphabet{\mathbsf}{OT1}{cmss}{bx}{n}
\DeclareMathAlphabet{\mathssf}{OT1}{cmss}{m}{sl}
\DeclareSymbolFont{bsfletters}{OT1}{cmss}{bx}{n}
\DeclareSymbolFont{ssfletters}{OT1}{cmss}{m}{n}
\DeclareMathSymbol{\bsfGamma}{0}{bsfletters}{'000}
\DeclareMathSymbol{\ssfGamma}{0}{ssfletters}{'000}
\DeclareMathSymbol{\bsfDelta}{0}{bsfletters}{'001}
\DeclareMathSymbol{\ssfDelta}{0}{ssfletters}{'001}
\DeclareMathSymbol{\bsfTheta}{0}{bsfletters}{'002}
\DeclareMathSymbol{\ssfTheta}{0}{ssfletters}{'002}
\DeclareMathSymbol{\bsfLambda}{0}{bsfletters}{'003}
\DeclareMathSymbol{\ssfLambda}{0}{ssfletters}{'003}
\DeclareMathSymbol{\bsfXi}{0}{bsfletters}{'004}
\DeclareMathSymbol{\ssfXi}{0}{ssfletters}{'004}
\DeclareMathSymbol{\bsfPi}{0}{bsfletters}{'005}
\DeclareMathSymbol{\ssfPi}{0}{ssfletters}{'005}
\DeclareMathSymbol{\bsfSigma}{0}{bsfletters}{'006}
\DeclareMathSymbol{\ssfSigma}{0}{ssfletters}{'006}
\DeclareMathSymbol{\bsfUpsilon}{0}{bsfletters}{'007}
\DeclareMathSymbol{\ssfUpsilon}{0}{ssfletters}{'007}
\DeclareMathSymbol{\bsfPhi}{0}{bsfletters}{'010}
\DeclareMathSymbol{\ssfPhi}{0}{ssfletters}{'010}
\DeclareMathSymbol{\bsfPsi}{0}{bsfletters}{'011}
\DeclareMathSymbol{\ssfPsi}{0}{ssfletters}{'011}
\DeclareMathSymbol{\bsfOmega}{0}{bsfletters}{'012}
\DeclareMathSymbol{\ssfOmega}{0}{ssfletters}{'012}
\newcommand{\hatB}{\widehat{B}}
\newcommand{\tilc}{\widetilde{c}}
\newcommand{\tilf}{\widetilde{f}}
\newcommand{\tilg}{\widetilde{g}}
\newcommand{\tilh}{\widetilde{h}}
\newcommand{\tilP}{\widetilde{P}}
\newcommand{\hatR}{\widehat{R}}
\newcommand{\tilR}{\widetilde{R}}
\newcommand{\tils}{\widetilde{s}}
\newcommand{\tilw}{\widetilde{w}}
\newcommand{\tily}{\widetilde{y}}
\def\norm#1{\left\| #1 \right\|}
\def\norm2#1{\left\| #1 \right\|_2}
\def\norm22#1{\left\| #1 \right\|_2^2}
\newcommand{\eqa}{\stackrel{(a)}{=}}
\newcommand{\eqb}{\stackrel{(b)}{=}}
\newcommand{\eqc}{\stackrel{(c)}{=}}
\newcommand{\eqd}{\stackrel{(d)}{=}}
\newcommand{\qednew}{\nobreak \ifvmode \relax \else
      \ifdim\lastskip<1.5em \hskip-\lastskip
      \hskip1.5em plus0em minus0.5em \fi \nobreak
      \vrule height0.75em width0.5em depth0.25em\fi}
\renewcommand{\baselinestretch}{1.33}
\newcommand{\tilmu}{\tilde{\mu}}
\newcommand{\tilsigma}{\tilde{\sigma}}
\newcommand{\tilepsilon}{\tilde{\epsilon}}
\title{Modulation in the Air: Backscatter Communication over Ambient OFDM Carrier}
\author{Gang~Yang, \emph{Member, IEEE}, Ying-Chang~Liang, \emph{Fellow, IEEE}, \\ 
Rui Zhang, \emph{Fellow, IEEE}, and Yiyang~Pei, \emph{Member, IEEE}
\thanks{G.~Yang is with the National Key Laboratory of Science and Technology on Communications, University of Electronic Science and Technology of China (UESTC), Chengdu, P. R. China (e-mail: yanggang@uestc.edu.cn).}
\thanks{Y.-C.~Liang is with University of Electronic Science and Technology of China (UESTC), Chengdu, P. R. China (e-mail: liangyc@ieee.org). He is also with the Institute for Infocomm Research (I$^2$R), A*STAR, Singapore.}
\thanks{R. Zhang is with the ECE Department, National University of Singapore (email: elezhang@nus.edu.sg).}
\thanks{Y. Pei is with Singapore Institute of Technology (email: yiyang.pei@singaporetech.edu.sg).}
\thanks{A preliminary version of this paper has been presented in \cite{YangLiangGC16}.}}
\begin{document}
\maketitle 
\vspace{-1.4cm}
\begin{abstract}
Ambient backscatter communication (AmBC) enables radio-frequency (RF) powered backscatter devices (BDs) (e.g., sensors, tags) to modulate their information bits over ambient RF carriers in an over-the-air manner. This technology also called ``modulation in the air'', thus has emerged as a promising solution to achieve green communications for future Internet-of-Things. This paper studies an AmBC system by leveraging the ambient orthogonal frequency division multiplexing (OFDM) modulated signals in the air. We first model such AmBC system from a spread-spectrum communication perspective, upon which a novel joint design for BD waveform and receiver detector is proposed. The BD symbol period is designed to be in general an integer multiplication of the OFDM symbol period, and the waveform for BD bit `0' maintains the same state within a BD symbol period, while the waveform for BD bit `1' has a state transition in the middle of each OFDM symbol period within a BD symbol period. In the receiver detector design, we construct the test statistic that cancels out the direct-link interference by exploiting the repeating structure of the ambient OFDM signals due to the use of cyclic prefix. For the system with a single-antenna receiver, the maximum-likelihood detector is proposed to recover the BD bits, for which the optimal threshold is obtained in closed-form expression. For the system with a multi-antenna receiver, we propose a new test statistic which is a linear combination of the per-antenna test statistics, and derive the corresponding optimal detector. The proposed optimal detectors require only knowing the strength of the backscatter channel, thus simplifying their implementation. Moreover, practical timing synchronization algorithms are proposed for the proposed AmBC system, and we also analyze the effect of various system parameters on the transmission rate and detection performance. Finally, extensive numerical results are provided to verify that the proposed transceiver design can improve the system bit-error-rate (BER) performance and the operating range significantly, and achieve much higher data rate, as compared to the conventional design.
\end{abstract}

\vspace{-0.5cm}
\begin{keywords}
Ambient backscatter communication (AmBC), orthogonal frequency division multiplexing (OFDM), spread spectrum, transceiver design, performance analysis, multi-antenna systems.
\end{keywords}

\section{Introduction}\label{introduction}
Wireless communication powered by ambient or dedicated radio-frequency (RF) source has drawn significant attention recently~\cite{BiZhangComMag15, BiZhangWirelessCom16}. In particular, ambient backscatter communication (AmBC) enables RF-powered backscatter devices (BDs) to harvest power from ambient RF signals (e.g., TV signal and WiFi signal), and to transmit information to nearby receivers (e.g., reader and smartphone) over the ambient RF carriers~\cite{ABCSigcom13}. Thus, it has drawn significant attention from both academia and industry recently. Since AmBC is carried out at the same frequency band as the ambient wireless system, they can be viewed as a spectrum sharing system \cite{YCLiangTVT15}. Different from traditional backscatter communication such as radio-frequency-identification (RFID) systems~\cite{Dobkinbook2007}, AmBC can exempt the reader from generating RF sinusoidal carriers, thus enables low-cost and energy-efficient ubiquitous communications. It has also been verified that the harvested power from ambient RF signals could be sufficient to power a high-throughput battery-less sensor~\cite{SampleSmith07, ParksSmith14}. Thus, AmBC is a promising technology for green communications with great potential for applications in next-generation Internet of Things (IoT)~\cite{Ishizaki11}.

Due to the spectrum sharing nature, an inherent characteristic of the AmBC system is that the receiver suffers from strong direct-link interference out of the remote RF source. There are existing literature on receiver design for AmBC which treat the direct-link interference as part of the background noise~\cite{ABCSigcom13, QianGaoAmBCTWC16, WangGaoAmBCTCOM16, WiFiBackscatter14}. In \cite{ABCSigcom13}, an averaging detector is proposed to decode the BD bits by treating the strong direct-link interference as noise, which results in very low decoding signal-to-noise-ratio (SNR) and thus low data rate. \textcolor{black}{In~\cite{QianGaoAmBCTWC16} and~\cite{WangGaoAmBCTCOM16}, maximum-likelihood (ML) detection is studied for an AmBC system in which the BD adopts differential modulation. However, for the scenario where the backscatter channel is much weaker than the direct-link channel, which is typical in practice due to BD's small reflection coefficient and double-attenuation in the backscatter link, the proposed detection schemes suffer from severe performance degradation.} An ambient WiFi backscatter system is proposed in \cite{WiFiBackscatter14}, in which a WiFi helper (e.g., smartphone) decodes the BD bits by detecting the changes in received signal strength indication (RSSI). However, this system has very low data rate and very limited communication range (less than one meter), since the detection of the RSSI changes suffers from the strong direct-link interference from the WiFi access point (AP).

Moreover, there are other literature on receiver design for AmBC which cancel the direct-link interference by using signal processing methods~\cite{TurbochargingABCSigcom14, BackFiSigcom15,HitchHikeKattiSenSys16}. In~\cite{TurbochargingABCSigcom14}, the authors propose to use two receive antennas to cancel out the direct-link interference and thus increase the data rate. However, this interference cancellation scheme increases the complexity and cost of the receiver, and cannot be readily applied to a receiver with single or more than two antennas. A new WiFi backscatter system is proposed in \cite{BackFiSigcom15}, in which the WiFi AP decodes the received backscattered signal while simultaneously transmitting WiFi packages to a standard WiFi client. This design relies on the self-interference-cancellation for full-duplex radios, resulting in high complexity and cost, thus cannot be applied to commercial WiFi APs with complexity and cost constraints. \textcolor{black}{In~\cite{HitchHikeKattiSenSys16}, a backscatter device embeds its information on standard 802.11b packets by translating the originally transmitted 802.11b codeword to another valid 802.11b codeword. The detection WiFi AP first receives the original 802.11b packet decoded and sent by another WiFi AP, and then uses an XOR decoder to detect the information bit of the backscatter device. However, such detection relies on the collaboration of two WiFi APs, and this system is only applicable for the scenario of ambient 802.11b signal.}

\textcolor{black}{On the other hand, some recent work on AmBC address the problem of direct-link interference from the perspective of system design~\cite{PassiveWiFiNSDI16,InterscatterSigcom16, FSBackscatterSigcomm16}. A passive WiFi system is proposed in~\cite{PassiveWiFiNSDI16}, which requires a dedicated device to transmit RF sinusoidal carrier at a frequency that lies outside the desired WiFi channel, such that the WiFi receiver can suppress the resulting out-of-band (direct-link) carrier interference. However, the passive WiFi cannot be fully plugged-in and work with commodity devices, since it needs a dedicated device with specialized hardware to generate RF signal and perform carrier sensing. An inter-technology backscatter (interscatter) system is proposed in~\cite{InterscatterSigcom16}, which transforms wireless transmissions from one technology (e.g., Bluetooth) to another (e.g., WiFi) in the air. The BD creates frequency shifts on a single side of the carrier by using complex impedance of its backscatter circuit, so as to suppress the direct-link (carrier) interference and avoid a redundant copy on the opposite side of the carrier which exists in sideband modulation~\cite{PassiveWiFiNSDI16}. Similarly, a frequency-shifted backscatter (FS-Backscatter) system is proposed in~\cite{FSBackscatterSigcomm16} for on-body sensor applications, which reduces carrier interference by shifting the backscattered signal to a clean band that does not overlap with the carrier. However, the FS-Backscatter system cannot be applied to application scenarios where there is obvious temperature variability in the environment.}

Also, there are some recent work on traditional backscatter communication systems~\cite{HeWangCL11, BoyerSumit14, Carvalho14, BoaventuraCarvalho13, YangBackscatter15, KimionisSahalosTCOM14, HuangWPBCN16}, which use dedicated reader infrastructure to transmit RF sinusoidal carriers to power the passive tag and retrieve the tag information. The rate and reliability performance can be enhanced by using multi-antenna techniques. For instance, the bit-error-rate (BER) performance is analyzed for non-coherent frequency-shift-keying modulation in multi-input-single-output (MISO) fading channels~\cite{HeWangCL11}, and the diversity-multiplexing trade-off is investigated in~\cite{BoyerSumit14}. The reading range of RFID tags is improved by rectifier circuit design (see~\cite{Carvalho14} and references therein), special waveform design~\cite{BoaventuraCarvalho13} and energy beamforming~\cite{YangBackscatter15}. \textcolor{black}{In~\cite{HuangWPBCN16}, the authors propose a network architecture that integrates wireless power transfer and backscatter communications, called wirelessly powered backscatter communication networks, and optimize the network coverage probability and transmission capacity by applying stochastic geometry.} Recently, a bistatic architecture for backscatter communication is proposed in~\cite{KimionisSahalosTCOM14}, i.e., the low-cost carrier emitter is detached from the reader, to increase the reading range. While this bistatic backscatter communication system is similar to an AmBC system, this system requires to deploy and maintain a separate carrier emitter, which results in additional cost.

In this paper, we consider a new  AmBC system over ambient orthogonal frequency division multiplexing (OFDM) modulated carriers in the air. OFDM is a widely used modulation scheme in current wireless systems such as WiFi and DVB~\cite{GoldsmithWC2005}, thus is a readily available ambient RF source. We aim to investigate system modelling, transceiver design and performance analysis for such AmBC system. The main contributions of this paper are summarized as follows:
\begin{itemize}
  \item We first establish the system model for AmBC from a spread-spectrum (SS) communication perspective. We view the backscatter operation at the BD as an SS modulation technique, called ``modulation in the air'', i.e., the backscattered signal is viewed as the multiplication of a low-rate BD data signal and a high-rate spreading signal (i.e., the received ambient signal at the BD from the RF source) in an over-the-air manner. However, the ambient backscatter system is different from traditional SS systems in the sense that the spreading codes are unknown and time varying, and the receiver suffers from strong direct-link interference from the RF source.
  \item Then, based on the established system model, we propose a novel joint design for BD waveform and receiver detector, which cancels out the direct-link interference. The BD symbol period is designed to be in general an integer multiplication of the OFDM symbol period, and the waveform for BD bit `0' maintains the same state within a BD symbol period, while the waveform for BD bit `1' has a state transition in the middle of each OFDM symbol period within a BD symbol period. With this new BD waveform design, we construct the test statistic for BD signal detection that cancels out the direct-link interference by exploiting the repeating structure of the ambient OFDM signals due to the use of cyclic prefix (CP). Our joint transceiver design cancels out direct-link interference without increasing the hardware complexity. To the best of our knowledge, this interference-cancellation scheme is proposed in this paper for the first time. 
  \item Furthermore, we investigate the optimal detector design for the receiver in the considered AmBC system. For the case of single-antenna receiver, based on the constructed test statistic, the ML detector is derived, for which the optimal detection threshold is obtained in closed-form expression. For the case of multi-antenna receiver, we propose a new test statistic for BD signal detection, which is a linear combination of the per-antenna test statistics, each constructed from the received signal at one receive antenna. The corresponding optimal detector is then derived. To perform optimal detection, the receiver requires to estimate only the strength of the backscatter channel. This exempts the receiver from estimating the complete information of the direct-link and backscatter channels, which is challenging for the AmBC system, due to unknown ambient signal and strong direct-link interference.
  \item To implement the designed AmBC system, we propose efficient methods for estimating the essential parameters required. Specifically, the BD uses an autocorrelation-based method to estimate the propagation delay for the source-to-BD channel in a blind manner, and we propose a new algorithm to perform timing synchronization at the receiver. The proposed algorithm exempts the receiver from knowing the synchronization preambles in the ambient OFDM signals.
  \item Moreover, we analyze the effect of various system parameters including the CP length, number of subcarriers, detection SNR and maximum channel spread, on the transmission rate and detection performance of the AmBC system.
  \item Finally, extensive numerical results are provided to show that the proposed transceiver design can achieve much lower BER and higher data rate than the conventional design~\cite{ABCSigcom13}. Also, numerical results show that the proposed timing synchronization methods are practically valid and efficient, and the deployment of multiple antennas at the receiver can improve the BER performance as well as the operating range significantly.
\end{itemize}

{\emph{Organizations}}: The rest of this paper is organized as follows. Section~\ref{SystemModel} presents the system model for AmBC over ambient OFDM carriers in the air. Section~\ref{sec:OptDetection} first presents the spread-spectrum based signal model, and then proposes the optimal joint transceiver design for the BD waveform and the receiver detector in a single-antenna system. Section~\ref{sec:parameter_est} presents practical methods for estimating essential parameters for implementing AmBC systems. Section~\ref{sec:MA_extension} further investigates the optimal receiver design for the case of multi-antenna receiver. Section~\ref{sec:detection_analysis} analyzes the effect of various system parameters on the transmission rate and detection performance. Section~\ref{sec:simulation} presents the numerical results. Finally, Section~\ref{conslusion} concludes the paper.

{\emph{Notations}}: The main notations in this paper are listed as follows. $|\cdot|$ means the operation of taking the absolute value. $\otimes$ stands for the convolution operation of two signals. $\calN(\mu, \sigma^2)$ denotes the real Gaussian distribution with mean $\mu$ and variance $\sigma^2$. $\calC \calN(\mu, \sigma^2)$ denotes the circularly symmetric complex Gaussian (CSCG) distribution with mean $\mu$ and variance $\sigma^2$. $\calQ(x)$ denotes the $\calQ$-function, i.e., $\calQ(x)=\frac{1}{\sqrt{2 \pi}} \int_x^{\infty} e^{-r^2/2} dr$. $\bbE[\cdot]$ denotes the statistical expectation. $\lfloor \cdot \rfloor$ denotes the floor operation.

\section{System Model And Protocol Description} \label{SystemModel}
In this section, we present the system model and describe the link-layer protocol for the AmBC system over ambient OFDM carriers in the air. 
\subsection{System Model}
\begin{figure}[t!]
\vspace{-0.1cm}
\centering
\includegraphics[width=.65\columnwidth] {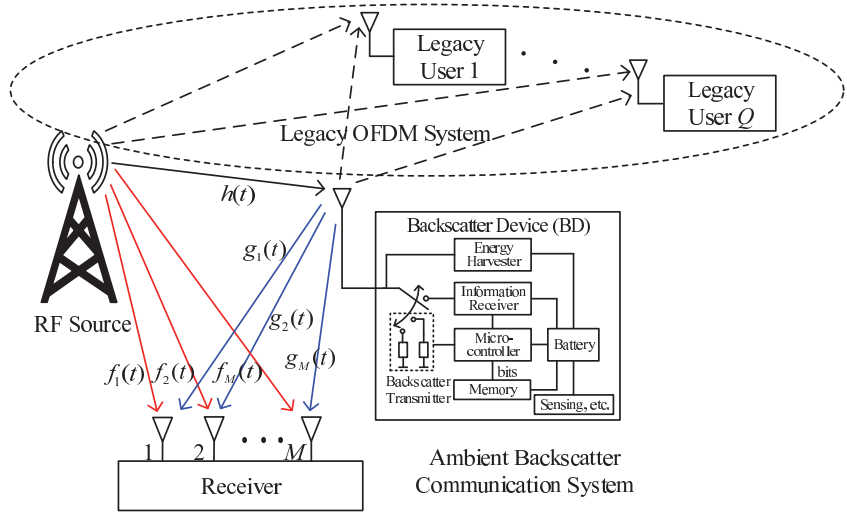}
\caption{System model for ambient backscatter communications over OFDM carriers in the air.}
\label{fig:Fig1}
\vspace{-0.8cm}
\end{figure}
As illustrated in Fig.~\ref{fig:Fig1}, we consider two co-existing communication systems: the legacy\footnote{Hereinafter, the term ``legacy'' refers to existing wireless communication systems, such as DVB, cellular and WiFi systems.} OFDM system which consists of an RF source (e.g., TV tower, WiFi AP) and its dedicated (legacy) users (e.g., TV receiver, WiFi client), and the AmBC system which consists of a BD with RF-power harvesting module and a receiver (e.g., reader in RFID systems) equipped with in general $M \ (M \geq 1)$ antennas. The RF source transmits OFDM signals to the legacy users. We are interested in the AmBC system in which the BD transmits its modulated signals to the receiver over the ambient OFDM carrier from the RF source. \textcolor{black}{The BD contains a single backscatter antenna, a backscatter transmitter (i.e., a switched load impedance), an energy harvester, an information receiver\footnote{\textcolor{black}{In practice, the simple BD can adopt the direct-conversion structure for information receiver, which is of low hardware-complexity, small size and low power~\cite{BryantRFIC2012}. It can further use ultra low-power analog-to-digital converter to reduce power consumption.}}, a micro-controller, a memory, a rechargeable battery replenished by the energy harvester, and other modules (e.g., sensing). The energy harvester collects energy from ambient OFDM signals and uses it to replenish the battery which provides power for all modules of the BD. To transmit information bits stored in the memory to the receiver, the BD modulates its received ambient OFDM carrier by intentionally switching the load impedance to change the amplitude and/or phase of its backscattered signal, and the backscattered signal is received and finally decoded by the receiver. Also, the BD antenna can be switched to the information receiver which is able to perform information decoding and other simple signal processing operations.}

As shown in Fig.~\ref{fig:Fig1}, for the channel between the RF source and the BD, we denote $h(t)$ as the (baseband) channel impulse response (CIR) with multi-path spread $\tau_{\sf h}$, and $d_{\sf h}$ as the channel propagation delay which is the arrival time of the first path of $h(t)$, respectively. Similarly, we denote $g_m(t)$ as the CIR with multi-path spread $\tau_{\sf g}$ and propagation delay $d_{\sf g}$ between the BD and the $m$-th receive antenna at the receiver, and $f_{m}(t)$ as the CIR with multi-path spread $\tau_{\sf f}$ and propagation delay $d_{\sf f}$ between the RF source and the $m$-th receive antenna at the receiver, respectively. \textcolor{black}{The corresponding passband channels are denoted as $\tilf_m(t)$'s, $\tilh(t)$, and $\tilg_m(t)$'s, respectively.} The backscatter channel is a concatenation of the source-to-BD channel and the BD-to-receiver channel. For convenience, we define the total channel spread for each channel as the sum of its multi-path spread and its channel propagation delay. \textcolor{black}{We assume that the relevant channels are independent from each other, which is a typical assumption in the literature such as~\cite{HeWangCL11}, \cite{YangBackscatter15}, and~\cite{BoyerRoyTWC13}}.

\subsubsection{Continuous-time Signal Model}
For the AmBC system, we first establish the model for continuous-time signals. Denote the passband signal transmitted from the RF source by
\begin{align}\label{eq:tx_signal_receiver_RF}
  \tils (t) \triangleq \text{Re}\left\{\sqrt{p} s(t) e^{j 2 \pi f_{\sf c} t}\right\},
\end{align}
where $s(t)$ is the baseband OFDM signal with unit power, $p$ is the average transmit power, and $f_{\sf c}$ represents the carrier frequency of the RF source.

Let $N$ be the number of subcarriers of the OFDM signal $s(t)$. In order to combat the inter-symbol-interference (ISI), a CP is added at the beginning of each OFDM symbol, and the CP length $t_{\sf c}$ is set to be longer than the maximum channel spread of all legacy OFDM receivers~\cite{GoldsmithWC2005}. \textcolor{black}{Typically, compared to the legacy OFDM receivers, the AmBC system is deployed in a place relatively close to the RF source, such that the BD can harvest more energy from the RF source. Based on this deployment criterion, we have two practical assumptions. First, we assume that the CP length $t_{\sf c}$ is much longer than the maximum channel spread of AmBC system. Second, we assume that the energy from the rechargeable battery at the BD is sufficient for its operation, and focus on the transceiver design for AmBC, for the purpose of exposition.}

From Fig.~\ref{fig:Fig1} and~\eqref{eq:tx_signal_receiver_RF}, the ambient OFDM signal received at the BD can be represented by using the baseband signal $s(t)$ and the baseband channel $h(t)$~\cite{GoldsmithWC2005}, i.e.,
\begin{align} \label{eq:rx_signal_BD_RF}
  \tilc(t) = \text{Re}\left\{ \left[\sqrt{p} s(t-d_{\sf h}) \otimes h(t) \right] e^{j 2 \pi f_{\sf c} (t-d_{\sf h})} \right\}.
\end{align}
Thus the corresponding baseband signal received at the BD is
\begin{align} \label{eq:rx_signal_BD}
  c(t) = \sqrt{p} s(t-d_{\sf h}) \otimes h(t).
\end{align}

Let $x(t)$ be the BD's baseband signal to be transmitted. Denote $\alpha$ as the complex attenuation (i.e., reflection coefficient) of the signal $\tilc(t)$ inside the BD, which depends on the antenna impedance and the load impedance of the BD. Then the backscattered signal out of the BD is $\alpha \tilc(t) x(t)$, where $\alpha$ controls the power of the backscattered signal by the BD.

\begin{myrem}
The above operation of backscattering can be viewed as a new modulation technique for wireless communication systems. The incident passband signal $\tilc(t)$, which is from the air, plays the role of carrier signal, and the BD signal $x(t)$ is the baseband modulation signal. We call the technique of modulating the BD signal $x(t)$ to the carrier $\tilc(t)$ as ``modulation in the air'', which exempts the BD from generating RF sinusoidal carriers locally and thus significantly reduces its hardware complexity and power consumption.
\end{myrem}

The received passband signal at the $m$-th antenna, $m=1, \ 2, \cdots \ M$, of the receiver is thus
\begin{align} \label{eq:rx_signal_receiver_RF}
  \tily_m(t) = \left[\alpha \tilc(t-d_{\sf g}) x(t-d_{\sf g})\right] \otimes \tilg_m(t)  +  \tils(t-d_{\sf f}) \otimes \tilf_m(t) + \tilw_m(t),
\end{align}
where $\tilw_m(t)$ is the passband noise at the receiver. After down-conversion to baseband, the received baseband signal at the receiver is written as
\begin{align} \label{eq:rx_signal_receiver}
  y_m(t) = y_{{\sf b}, m} (t) +  y_{{\sf d}, m} (t) + w_m(t),
\end{align}
where $y_{{\sf b}, m} (t) = [\alpha c(t-d_{\sf g}) x(t-d_{\sf g})] \otimes g_m(t)$ is the received backscattered signal from the BD,
$y_{{\sf d}, m} (t) = \sqrt{p} s(t-d_{\sf f}) \otimes f_m(t)$ is the direct-link interference from the RF source,
and $w_m(t)$ is the equivalent baseband additive white Gaussian noise (AWGN) with power $\sigma^2$, i.e., $w_m(t) \sim \calC \calN(0, \sigma^2)$. We assume that the noise term $w_m (t)$ is independent of the signals $y_{{\sf d}, m} (t)$ and $y_{{\sf b}, m} (t)$.

Due to short BD-to-receiver distance in practice, we assume that each channel $g_m(t)$ has a single path, denoted by $g_m$. The received backscattered signal at the $m$-th antenna of the receiver is thus simplified as 
\begin{align}\label{eq:rx_signal_receiver2}
  y_{{\sf b}, m} (t) &= \alpha g_m c(t-d_{\sf g}) x(t-d_{\sf g}) = \alpha g_m \sqrt{p} s(t-d_{\sf h}-d_{\sf g}) \otimes h(t-d_{\sf g}) x(t-d_{\sf g}).
\end{align}

\begin{myrem}
  Since the AmBC is carried out at the same frequency band as the legacy OFDM system, the whole system in Fig.~\ref{fig:Fig1} can be considered as a spectrum sharing system~\cite{YCLiangTVT15, ZhangLiangJSTSP15}. The backscattered signal $\alpha \tilc(t) x(t)$ is also received by each nearby legacy OFDM user through the channel from the BD to the legacy user, resulting in interference to the legacy OFDM system. However, the power of this resulting interference is typically much lower than that of the received signal from the RF source, even if the legacy users and the BD are close to each other. \textcolor{black}{The reason is two-fold. First, the absolute value of reflection coefficient $\alpha$ is typically very small~\cite{Dobkinbook2007}. Second, the received signal from the RF source suffers from one-round channel attenuation, while the interference signal (i.e., backscattered signal) suffers from two-round channel attenuation, i.e., the source-to-BD channel and the BD-to-legacy-user channel.} In a similar argument, at the side of the receiver in the AmBC system, the direct-link interference from the RF source is also typically much stronger than the backscattered signal from the BD, which makes the BD signal detection a challenging task.
\end{myrem}

\subsubsection{Discrete-time Signal Model}
Let $f_{\sf s}$ be the sampling rate (or equivalently, the one-sided bandwidth) of the ambient OFDM signal. The discrete-time propagation delays for all channels are thus denoted as $D_{\sf h}=\lfloor d_{\sf h} f_{\sf s} \rfloor$, $D_{\sf g} =\lfloor d_{\sf g} f_{\sf s} \rfloor$ and $D_{\sf f} =\lfloor d_{\sf f} f_{\sf s} \rfloor$, respectively. Denote the propagation delay of the backscatter channel by $D_{\sf b} =\lfloor (d_{\sf h}+d_{\sf g}) f_{\sf s} \rfloor$. Define the minimum channel propagation delay as $D = \min\{D_{\sf f}, D_{\sf b}\}$. Since $d_{\sf h} \gg d_{\sf g}$ in practice, we assume that $D_{\sf b} \approx D_{\sf h}$. Similarly, the discrete-time total channel spreads are denoted as $L_{\sf h}=\lfloor (d_{\sf h}+\tau_{\sf h}) f_{\sf s} \rfloor$, $L_{\sf g} =\lfloor (d_{\sf g}+\tau_{\sf g}) f_{\sf s} \rfloor$ and $L_{\sf f} =\lfloor (d_{\sf f}+\tau_{\sf f}) f_{\sf s} \rfloor$, respectively. Denote the total channel spread of the backscatter channel by $L_{\sf b} =\lfloor (d_{\sf h}+\tau_{\sf h}+d_{\sf g}+\tau_{\sf g}) f_{\sf s} \rfloor$. Define the maximum channel spread as $L = \max\{L_{\sf f}, L_{\sf b}\}$.

For convenience, we rewrite the discrete-time source signal,  received signal at the BD, and the BD signal as $s[n]=s\left(n/f_{\sf s}\right)$, $c[n]=c\left(n/f_{\sf s}\right)$, and $x[n]=x\left(n/f_{\sf s}\right)$, respectively. After analog-to-digital conversion, the discrete-time representation of the received signal at the receiver is
\begin{align} \label{eq:rx_signal_receiver_dt}
  y_m[n] = y_{{\sf b}, m} [n] +  y_{{\sf d}, m} [n] + w_m[n],
\end{align}
where $y_{{\sf b}, m} [n] = \alpha c[n-D_{\sf g}] x[n-D_{\sf g}] \otimes g_m[n]$, $y_{{\sf d}, m} [n] = \sqrt{p} s[n-D_{\sf f}] \otimes f_m[n]$, and $w_m[n] \sim \calC \calN(0, \sigma^2)$. The relevant discrete-time signals are illustrated in Fig.~\ref{fig:Fig2} in Section~\ref{sec:OptDetection}.

For the studied AmBC system, our objective is to design the BD waveform and receiver detector to recover the BD signal $x[n]$ from the received signals $y_m[n]$'s at the receiver, without knowing the ambient signal $\sqrt{p} s[n]$ transmitted from the RF source. In the rest of this paper, we use the baseband signals and the baseband CIRs, for clarity of description. Before designing the transceiver, we propose a protocol for the AmBC system to operate in practice.

\subsection{Link-layer Protocol Design}~\label{sec:protocol}
As illustrated in Fig.~\ref{fig:Fig2B}, the proposed protocol adopts frame-based transmission, where each BD frame with (discrete) time duration $T_{\sf f}$ consists of four phases, the wake-up-preamble transmission (WUPT) phase with time duration $T_{\sf w}$, the blind timing synchronization (BTS) phase with time duration $T_{\sf b}$, the training-preamble transmission (TPT) phase with time duration $T_{\sf t}$, and the device data transmission (DDT) phase with time duration $T_{\sf d}$. The BD by default is in a sleep mode to save energy if it has no data to transmit. Once it has enough data to transmit, the BD is activated. In the following, we specify the operations in each phase.

\begin{figure}[t!]
\centering
\includegraphics[width=.85\columnwidth]{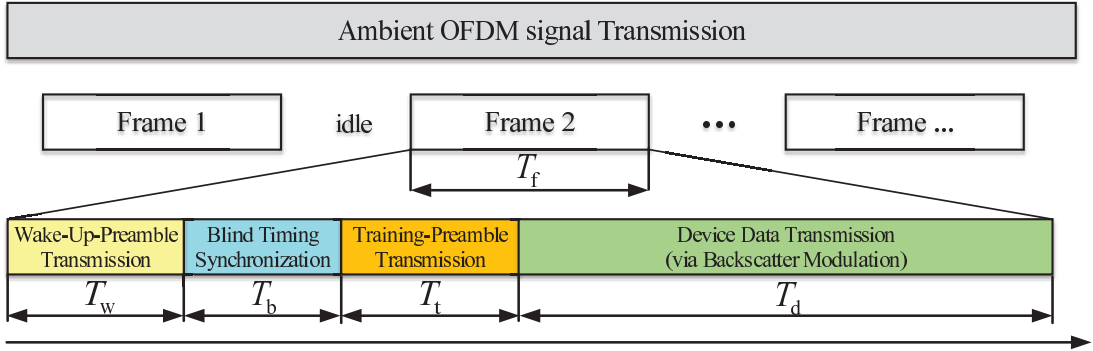}
\caption{Protocol design for AmBC system.}
\label{fig:Fig2B}
\vspace{-0.8cm}
\end{figure}

\begin{itemize}
  \item In the first WUPT phase, the BD is switched into the backscatter transmitter, and the BD sends a specialized preamble to activate the receiver's hardware. As in the literature~\cite{ABCSigcom13},~\cite{WiFiBackscatter14},~\cite{InterscatterSigcom16}, a short sequence of alternating `1' and `0' can be used as the wake-up preamble. The power consumption of the receiver can be saved via this event-driven wake-up scheme.
   \item In the second BTS phase, the BD switches its antenna into the information receiver and estimates the channel propagation delay $D_{\sf h}$ from the RF source. The BD performs blind timing estimation by exploiting the CP structure of $s[n]$, since it does not know the ambient OFDM signal $s[n]$. The blind estimation algorithm will be presented in Section~\ref{sec:timing_ABD}.
  \item In the third TPT phase, the BD switches its antenna into the backscatter transmitter, and chooses the estimate of $D_{\sf h}$ as the starting time to send a training preamble known by the receiver. The receiver receives both the direct-link signal $y_{\sf d}[n]$ and the preamble signal backscattered from the BD. Then the receiver estimates essential parameters including the minimum propagation delay $D$ and the maximum channel spread $L$, as well as the average signal power $\sigma_{u}^2$ when the BD is backscattering. Such estimation algorithms will be presented in Section~\ref{sec:timing_receiver}.
  \item In the fourth DDT phase, the BD switches its antenna into the backscatter transmitter, and transmits data bits to the receiver. Both the BD waveform and the receiver detector will be studied in Section~\ref{sec:OptDetection} in detail.
\end{itemize}

The time allocations for BTS and TPT phases affect both the parameter estimation accuracy and the communication throughput in the DDT phase. The effect on parameter estimation will be numerically investigated in Section~\ref{sec:simulation}. However, this paper focuses on the transceiver design, and will not derive the optimal timing allocation for all phases, due to the space limitation.

\section{Transceiver Design for Single-antenna System}\label{sec:OptDetection}
In this section, we establish a spread-spectrum (SS) model for the general signal model in~\eqref{eq:rx_signal_receiver_dt}, based on which we study the transceiver design, including the BD waveform design and the optimal receiver detector, for an AmBC system with a single-antenna receiver, i.e. $M=1$.

\subsection{A Spread-Spectrum Perspective for Signal Model}\label{subsec:spread_spectrum}
Since the switching frequency of backscatter state at the BD is typically much smaller than the OFDM sampling rate $f_{\sf s}$~\cite{Dobkinbook2007}, the signal backscattered by the BD can be viewed as the multiplication of a low-rate BD data signal $x[n]$ scaled by the reflection coefficient $\alpha$, and the high-rate spreading-code signal $c[n]$ in an over-the-air manner. The chip duration of this spreading code is equal to the sampling period of the received OFDM signal $c[n]$, which is much shorter than the duration of each BD symbol. Suppose that the BD symbol duration is designed to be equal to $N_0$ OFDM sampling periods (i.e., $N_{0}/f_{\sf s}$), the processing gain (or spreading factor), denoted by $G$, is then $G = N_0$, and the data rate of BD transmission is $\frac{f_s}{N_0}$. Hence, the BD symbol period needs to be designed carefully to achieve the optimal trade-off between the processing gain and the data rate.

Different from traditional SS communication systems, there are three main challenges for the receiver to detect the BD signal $x[n]$ from its received signals $y_m[n]$'s, which are listed as follows:
\begin{itemize}
  \item First, the spreading code $c[n]$ depends on the unknown ambient signal $\sqrt{p} s[n]$ and the fading channel $h[n]$, thus is time-varying, and unknown to the receiver. Therefore, the traditional correlation detection cannot be applied to the considered system.
  \item Second, the received direct-link interference signals $y_{{\sf d}, m} [n]$'s are typically unknown by the receiver and much stronger than the received backscattered signals $y_{{\sf b}, m} [n]$'s, resulting in very low signal-to-interference-noise ratio (SINR) for the receiver if they are treated as part of the background noise.
  \item Third, due to unknown ambient signal $\sqrt{p} s[n]$ and strong direct-link interferences $y_{{\sf d}, m} [n]$'s, it is challenging for the receiver to estimate all the fading channels $h[n]$, $g_m$'s and $f_m[n]$'s. Hence, coherent detection cannot be applied to the considered system.
\end{itemize}

To solve the above challenging problems, in the rest of this section, we focus on the joint transceiver design for the AmBC system with a single-antenna receiver. For notational simplicity, the subscript $m=1$ is omitted in the rest of this section.

\subsection{BD Waveform Design}\label{subsec: TX_design}
Specifically, we can design the time duration of each BD symbol to be equal to $K$ ($K \geq 1$) OFDM symbol periods each of which consists of $(N+N_{\sf c})$ sampling periods, i.e., the spreading gain is $G= K (N+N_{\sf c})$. As shown in Fig.~\ref{fig:Fig2_BD}, the BD uses the following waveform $x[n]$ to convey information bit $B=1$ in each BD symbol,
\begin{align}\label{eq:waveform_B1}
  x[n] = \left\{\begin{array}{cl}
  1, \; &\text{for} \quad n=(k-1)(N+N_{\sf c}),\; \ldots, \; (k-1)(N+N_{\sf c})+\frac{N+N_{\sf c}}{2}-1,\\
  -1, \; &\text{for} \quad n = (k-1)(N+N_{\sf c})+\frac{N+N_{\sf c}}{2}, \; \ldots, k(N+N_{\sf c})-1, \\
  \end{array}
  \right.
\end{align}
for $k=1, \; \ldots, \; K$, where for convenience we assume that $(N+N_{\sf c})$ is an even integer. The following waveform $x[n]$ is then used to convey information bit $B=0$ in each BD symbol,
\begin{align}\label{eq:waveform_B0}
  x[n] = 1, \; &\text{for} \quad n=0, \ldots, K(N+N_{\sf c})-1.
\end{align}
That is, for bit `1', there is a state transition in the middle of each OFDM symbol period within one BD symbol period, while for bit `0', there is no such transition.

\begin{figure}[!t]
\centering
\includegraphics[width=.8\columnwidth]{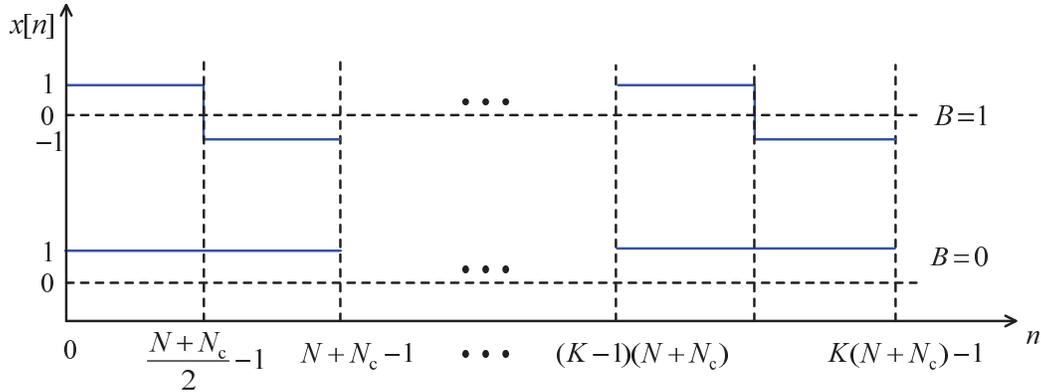}
\caption{BD waveform design.}
\label{fig:Fig2_BD}
\vspace{-0.8cm}
\end{figure}

\begin{myrem}
The waveform design in~\eqref{eq:waveform_B1} and ~\eqref{eq:waveform_B0} aims to enable the receiver to cancel out the strong direct-link interference, as will be shown in the next subsection. \textcolor{black}{Also the designed waveform can be easily implemented in simple and low-cost BDs, since it is similar to FM0 waveform widely used in commercial RFID tags~\cite{Dobkinbook2007}.}
\end{myrem}

\subsection{Receiver Detection Design}\label{subsec:opt_detector}
In this subsection, we study the detector design at the receiver. For convenience, we choose $K=1$, without loss of generality.
\subsubsection{Construction of Test Statistic}\label{subsec:test}
As characterized in~\eqref{eq:rx_signal_receiver_dt} for $m=1$, the direct-link signal $y_{{\sf d}} [n]$ and the backscatter-link signal $y_{{\sf b}} [n]$ experience different multi-paths $f_m[n]$ and $h[n] g_m$, respectively. However, both $y_{{\sf d}} [n]$ and $y_{{\sf b}} [n]$ have its repeating structure, which are illustrated in Fig.~\ref{fig:Fig2}, since \textcolor{black}{the CP is inserted at the beginning of each OFDM symbol of the RF source signal $s[n]$}.

\begin{figure}[!t]
\centering
\includegraphics[width=.8\columnwidth]{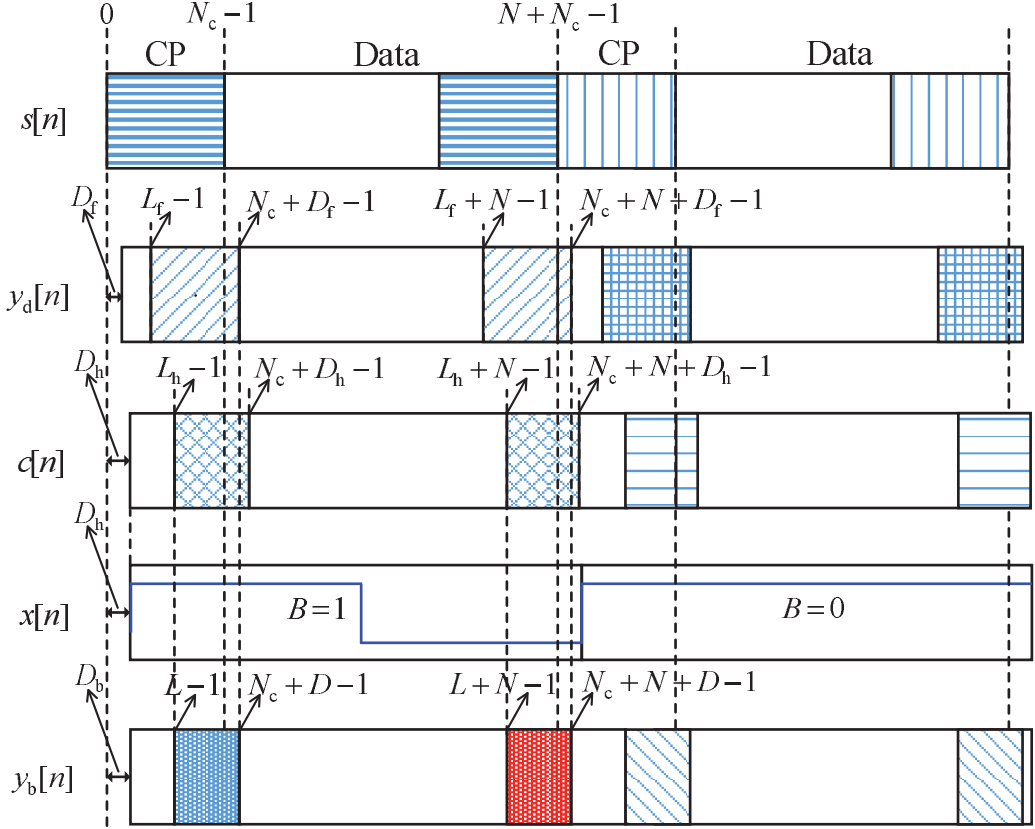} 
\caption{Signal structure for the case of $D_{\sf f} < D_{\sf h}$ and $L_{\sf f} < L_{\sf h}$.}
\label{fig:Fig2}
\vspace{-0.8cm}
\end{figure}

To be specific, as illustrated in Fig.~\ref{fig:Fig2}, due to the use of CP and the multi-path effect, two portions of the received signal from the RF source, $y_{\sf d}[n]$, in each OFDM symbol period at the receiver are identical, i.e., \begin{align}\label{eq:ydn_repeat}
  y_{\sf d} [n]=y_{\sf d} [n+N], \; \; n=L_{\sf f}-1, \; \cdots, \; N_{\sf c}+D_{\sf f}-1.
\end{align}

Similarly, the repeating structure holds for the received signal at the BD, $c[n]$, i.e.,
\begin{align}\label{eq:cn_repeat}
  c[n]&= c[n+N], \; \; n=L_{\sf h}-1, \cdots, \; N_{\sf c}+D_{\sf h}-1.
\end{align}

From~\eqref{eq:waveform_B1},~\eqref{eq:waveform_B0}, and~\eqref{eq:cn_repeat}, the received backscatter-link signal $y_{\sf b} [n]$ at the receiver has the following repeating structure,
\begin{align}\label{eq:ybn_repeat}
  y_{\sf b} [n]
  &=\left\{ \begin{array}{cl}
  y_{\sf b} [n+N], \; &\text{if} \; B=0,  \\
  -y_{\sf b} [n+N], \; &\text{if} \; B=1.  \\
  \end{array}
  \right.
\end{align}
for $n=L_{\sf b}-1, \cdots, \; N_{\sf c}+D_{\sf b}-1$.

\textcolor{black}{By utilizing the repeating structure of $y_{\sf d}[n]$  in~\eqref{eq:ydn_repeat} and $y_{\sf b}[n]$  in~\eqref{eq:ybn_repeat}, we further have}
\begin{align}\label{eq:zn}
  z[n] \!\triangleq \! y[n] \!-\! y[n \!+\! N]
  \!&=\! \left\{ \begin{array}{cl}
  \! \! v[n], \! \! &\text{if} \; B=0,  \\
  \! \! u[n] \!+\! v[n], \! \! &\text{if} \; B=1,  \\
  \end{array}
  \right.
\end{align}
for $n=L-1, \cdots, \; N_{\sf c}+D-1$, where the signal $u[n]$ and the noise $v[n]$ are given as follows, respectively,
\begin{align}
  u[n] &= 2 \alpha g \sqrt{p} \sum \limits_{l=0}^{L_{\sf h}-1} s[n-l] h[l], \\
  v[n] &=w[n]-w[n+N].
\end{align}
Clearly, the noise $v[n] \sim \calC \calN(0, \sigma_v^2)$ with power\footnote{The receiver can estimate $\sigma_v^2$ offline, since $\sigma_v^2$ depends on only the noise variance.} $\sigma_v^2=2 \sigma^2.$

For large $N$, the received OFDM signal $c[n]$ at the BD is a sequence of \textcolor{black}{independent and identically distributed (i.i.d.)} random variables each of which follows the CSCG distribution with zero mean and variance $p \sum \nolimits_{l=0}^{L_{\sf h}-1} \left| h[l] \right|^2$~\cite{YCLiangTradeoffTWC08, GoldsmithWC2005}. Hence, the signal $u[n] = 2 \alpha g c[n]$ is a sequence of independent random variables each of which is identically distributed as $u[n] \sim \calC \calN (0, \sigma_u^2)$, where the variance $\sigma_u^2$ is given by 
\begin{align}
\sigma_u^2 = 4 p |\alpha|^2 |g|^2 \sum \limits_{l=0}^{L_{\sf h}-1} \left| h[l] \right|^2.
\end{align}

For notational simplicity, we define the detection SNR as
\begin{align}\label{eq:SNR}
  \gamma \triangleq \frac{\sigma_u^2}{\sigma_v^2} = \frac{2 p |\alpha|^2 |g|^2 \sum \nolimits_{l=0}^{L_{\sf h}-1} \left| h[l] \right|^2}{\sigma^2}.
\end{align}
Notice that there is a trade-off between the detection SNR $\gamma$ in~\eqref{eq:SNR} and the power available for harvesting at the BD. When the reflection coefficient $\alpha$ increases, the detection SNR $\gamma$ increases, but the available power for harvesting decreases~\cite{BoyerSumit14}, due to energy conservation law, and vice versa.

\begin{myrem}
We have two observations for the constructed intermediate signal $z[n]$ in~\eqref{eq:zn}. First, in $z[n]$, the direct-link interference $y_{\sf d} [n]$ is completely cancelled out, and only the received backscattered signal $y_{\sf b} [n]$ remains if $B=1$. This leads to higher detection SNR $\gamma$, thus tackles the challenge of strong direct-link interference at the receiver. Second, the statistic of the signal $u[n]$ (i.e., the signal power $\sigma_u^2$) hinges on only the strength of the overall backscatter channel $\alpha g h[n]$'s, independent of the phase information of $g$ and $h[n]$'s. This exempts the receiver from estimating the individual channels $g$ and $h[n]$'s, thus tackles the challenge of channel estimation at the receiver.
\end{myrem}

For convenience of analysis, we define the repeating length of the constructed signal $z[n]$ in~\eqref{eq:zn} as $J \triangleq N_{\sf c} +D - L+1$. Since both the signal $u[n]$ and the noise $v[n]$ are CSCG, the optimal detector is the energy detector~\cite{SMKayStatisticalSP93}. Therefore, we construct the following test statistic,
\begin{align}\label{eq:testR}
  R&= \frac{1}{J \sigma_v^2} \sum \limits_{n=L-1}^{N_{\sf c}+D-1} \left| z[n] \right|^2.
\end{align}

\textcolor{black}{Clearly, for general case of arbitrary $J \geq 1$, the exact distribution of $R$ is Chi-square distribution with degrees of freedom $2J$. When the number of summation terms in~\eqref{eq:testR} (i.e., $J$) is large\footnote{Note that in practice $J$ can be effectively made arbitrarily large by increasing the spreading gain parameter, $K$, even with finite $N_{\sf c}$, $L$ and $D$ values.}, the central limit theorem (CLT)~\cite{WalpoleProbbook2011} implies that the distribution of $R$ can be approximated by the Gaussian distribution, which is given by the following lemma.}
\begin{mylem} \label{lem1}
When the repeating length $J$ is large, the conditional distribution of the test statistic $R$ is given by
\begin{align}\label{eq:cond_PDF}
  R = \left\{ \begin{array}{cl}
  R|_{B=0}  \sim \calN \left( \mu_0, \sigma_0^2\right), \quad &\text{if} \; B=0,  \\
  R|_{B=1}  \sim \calN \left( \mu_1, \sigma_1^2\right), \quad &\text{if} \; B=1, \\
  \end{array}
  \right.
\end{align}
where the mean values are
\begin{align}
  \mu_0 = 1, \quad \mu_1 = \gamma+1, \label{eq:mu_01}
\end{align}
and the variance values are
\begin{align}
  \sigma_0^2 = \frac{1}{J}, \quad \sigma_1^2 = \frac{(\gamma+1)^2}{J}. \label{eq:sigma2_01}
\end{align}
\end{mylem}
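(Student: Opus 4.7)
The plan is to compute, under each hypothesis, the first two moments of each summand $|z[n]|^2/\sigma_v^2$ in \eqref{eq:testR}, verify that these summands are independent across $n \in \{L-1,\ldots,N_{\sf c}+D-1\}$, and then invoke the central limit theorem on the resulting sum of $J$ i.i.d.\ terms.

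First, I would identify the marginal distribution of $z[n]$ under each hypothesis. From \eqref{eq:zn}, under $B=0$ we have $z[n]=v[n]\sim\calC\calN(0,\sigma_v^2)$, and under $B=1$ we have $z[n]=u[n]+v[n]$, where $u[n]$ and $v[n]$ are independent (the noise is assumed independent of the backscattered signal). Because $u[n]\sim\calC\calN(0,\sigma_u^2)$ for large $N$ (as already argued in the paragraph preceding the lemma using the asymptotic Gaussianity of OFDM samples), $z[n]|_{B=1}\sim\calC\calN(0,\sigma_u^2+\sigma_v^2)$. Consequently $|z[n]|^2/\sigma_v^2$ is exponentially distributed with mean $1$ under $B=0$ and mean $\gamma+1$ under $B=1$ (recalling the definition $\gamma=\sigma_u^2/\sigma_v^2$ in \eqref{eq:SNR}). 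An exponential random variable with mean $\lambda$ has variance $\lambda^2$, which immediately gives the per-term variances $1$ and $(\gamma+1)^2$.

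Second, I would establish independence of the summands across $n$. For $v[n]=w[n]-w[n+N]$, any two distinct indices $n_1,n_2$ in the summation window satisfy $|n_1-n_2|\le N_{\sf c}+D-L<N$, so the four samples $w[n_1],w[n_1+N],w[n_2],w[n_2+N]$ are pairwise distinct; since $w[\cdot]$ is AWGN, the $v[n]$'s are mutually independent. For $u[n]=2\alpha g\, c[n]$, independence across $n$ follows from the i.i.d.\ CSCG approximation of $c[n]$ invoked just above \eqref{eq:SNR}. Combining these with the independence of $u$ and $v$, the summands $|z[n]|^2$ are i.i.d.\ under each hypothesis.

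Finally, I would apply the classical Lindeberg--L\'evy CLT to
\begin{align*}
R=\frac{1}{J}\sum_{n=L-1}^{N_{\sf c}+D-1}\frac{|z[n]|^2}{\sigma_v^2},
\end{align*}
yielding the Gaussian approximations in \eqref{eq:cond_PDF} with means \eqref{eq:mu_01} and variances \eqref{eq:sigma2_01}, where the $1/J$ factor on the variances comes from averaging $J$ i.i.d.\ copies. The main subtle point, and the step I would be most careful about, is the independence argument for $u[n]$: strictly speaking the time-domain OFDM samples $c[n]$ are only asymptotically (in $N$) i.i.d.\ CSCG, so the Gaussian conclusion for $R$ is likewise asymptotic both in $N$ (justifying the per-sample CSCG model) and in $J$ (justifying the Gaussian approximation of the normalized sum); everything else is a direct moment computation.
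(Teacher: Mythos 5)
Your proposal is correct and follows essentially the same route as the paper's Appendix A proof: establish that the summands $\left|z[n]\right|^2/\sigma_v^2$ are i.i.d.\ under each hypothesis, compute their first two moments, and invoke the CLT on the normalized sum of $J$ terms. Your only departures are stylistic improvements --- reading off the per-term moments from the exponential distribution of $\left|z[n]\right|^2$ rather than computing $\bbE\left[|u[n]|^4\right]=2\sigma_u^4$ and $\bbE\left[|v[n]|^4\right]=2\sigma_v^4$ directly as the paper does, and making explicit the index-disjointness argument (window width $N_{\sf c}+D-L<N$) for independence of the $v[n]$'s, which the paper merely asserts.
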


\begin{proof}
  The proof is based on the CLT, and follows similar steps as in~\cite{YCLiangTradeoffTWC08}. Please refer to Appendix~\ref{App:Lemma1}.
\end{proof}



\subsubsection{Optimal Detector Design}\label{subsubsec:opt_detector}
Let $p(R |_{B=0})$ and $p(R |_{B=1})$ be the probability density function (PDF) of the conditional random variable $R |_{B=0}$ and $R |_{B=1}$, respectively. Since $B=0$ and $B=1$ are equally probable, the optimal detector follows the ML rule\cite{SMKayStatisticalSP93}, i.e., 
\begin{align}\label{eq:ML}
  \hatB = \left\{ \begin{array}{cl}
  0, \quad &\text{if} \; p(R |_{B=0}) > p(R |_{B=1}),  \\
  1, \quad &\text{if} \; p(R |_{B=1}) > p(R |_{B=0}). \\
  \end{array}
  \right.
\end{align}

In other words, the decision rule is $\hatB=0$ if $R <\epsilon$, and $\hatB=1$ otherwise, where $\epsilon$ is the detection threshold. From~\eqref{eq:cond_PDF}, the BER is 
\begin{align}\label{eq:BER}
  P_{\sf e} (\epsilon) &= \frac{1}{2} P(\hatB=1 | B=0) + \frac{1}{2} P(\hatB=0 | B=1) = \frac{1}{2} P_{\sf fa} (\epsilon)  + \frac{1}{2} P_{\sf md} (\epsilon),
\end{align}
where the probability of false alarm $P_{\sf fa} (\epsilon)$ and the probability of missing detection $P_{\sf md} (\epsilon)$ are given as follows, respectively,
\begin{align}
  P_{\sf fa} (\epsilon) &= \calQ \left( \frac{\epsilon - \mu_0}{\sqrt{\sigma_0^2}}\right) = \calQ\left( \sqrt{J} (\epsilon - 1) \right), \nonumber \\
  P_{\sf md} (\epsilon) &= \calQ \left( \frac{\mu_1 - \epsilon}{\sqrt{\sigma_1^2}}\right) = \calQ \left( \sqrt{J} \left(1- \frac{\epsilon}{\gamma + 1}\right) \right).
\end{align}
The BER is illustrated as the shadow area in Fig.~\ref{fig:Fig3}.
\begin{figure}[!t]
\vspace{-0.5cm}
\centering
\includegraphics[width=.65\columnwidth] {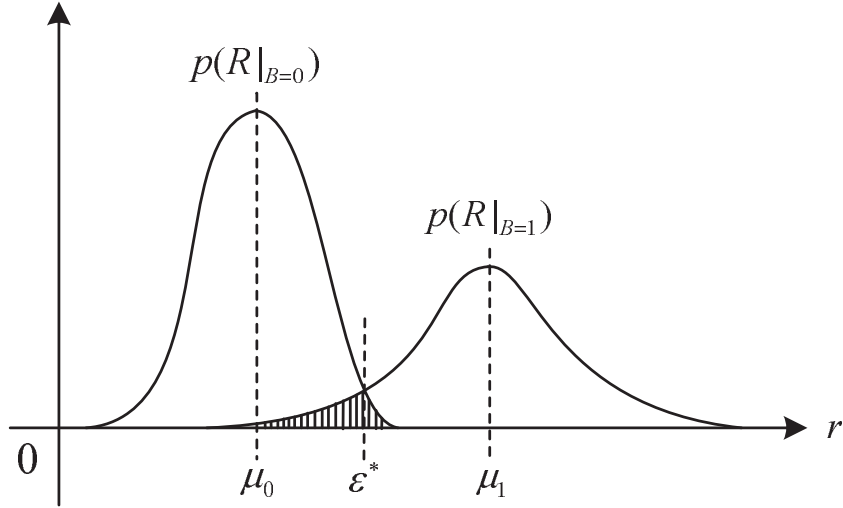}
\caption{Two conditional PDFs and the corresponding BER region (shadow area).}
\label{fig:Fig3}
\vspace{-0.5cm}
\end{figure}

Hence, the optimal threshold $\epsilon^{\star}$ is the solution to the following problem
\begin{subequations}\label{eq:opt1}
\begin{align}
   \textrm{(P1)} \ \ \ \underset{\epsilon}{\min} \ \
    &P_{\sf{e}} (\epsilon) = \frac{1}{2} \calQ\left( \sqrt{J} (\epsilon - 1) \right) + \frac{1}{2} \calQ \left( \sqrt{J} \left(1- \frac{\epsilon}{\gamma + 1}\right) \right) \label{eq:BER_def} \\
    \quad \text{s. t.} \ \
        &\epsilon > 0.
\end{align}
\end{subequations}

The optimal threshold $\epsilon^{\star}$ and the corresponding minimum BER $P_{\sf{e}, \min}$ are given in the following theorem.
\begin{mythe}\label{thm1}
The optimal detection threshold for the ML detector in~\eqref{eq:ML} is given by
\begin{align}\label{eq:opt_threshold}
  \epsilon^{\star} = \frac{\gamma+1}{\gamma (\gamma+2)} \left( \gamma + \sqrt{\gamma^2 + \frac{2 \gamma (\gamma+2) \ln (\gamma+1)}{J}}\right).
\end{align}
And the corresponding minimum BER is given by
\begin{align}\label{eq:min_BER}
  P_{\sf{e}, \min} &= \frac{1}{2} \calQ\left( \sqrt{J} (\epsilon^{\star} - 1) \right) + \frac{1}{2} \calQ \left( \sqrt{J} \left(1- \frac{\epsilon^{\star}}{\gamma + 1}\right) \right).
\end{align}
\end{mythe}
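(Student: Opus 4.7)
The plan is to treat (P1) as a standard one-dimensional unconstrained optimization: differentiate $P_{\sf e}(\epsilon)$ in~\eqref{eq:BER_def} with respect to $\epsilon$, set the derivative to zero, and solve the resulting equation in closed form. Since $\calQ'(x) = -\frac{1}{\sqrt{2\pi}}e^{-x^2/2}$, the stationarity condition becomes
\begin{align}
\sqrt{J}\,e^{-J(\epsilon-1)^2/2} \;=\; \frac{\sqrt{J}}{\gamma+1}\,e^{-J\left(1-\epsilon/(\gamma+1)\right)^2/2}. \nonumber
\end{align}
Taking logarithms and clearing constants turns this into the clean scalar equation
\begin{align}
(\epsilon-1)^2 \;-\; \left(1-\tfrac{\epsilon}{\gamma+1}\right)^2 \;=\; \tfrac{2\ln(\gamma+1)}{J}. \nonumber
\end{align}

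First I would expand both squares on the left-hand side; the constants cancel and the result collapses to a quadratic in $\epsilon$, namely
\begin{align}
\tfrac{\gamma(\gamma+2)}{(\gamma+1)^2}\,\epsilon^2 \;-\; \tfrac{2\gamma}{\gamma+1}\,\epsilon \;-\; \tfrac{2\ln(\gamma+1)}{J} \;=\; 0. \nonumber
\end{align}
Applying the quadratic formula and simplifying yields the two roots
\begin{align}
\epsilon \;=\; \tfrac{\gamma+1}{\gamma(\gamma+2)}\Bigl(\gamma \;\pm\; \sqrt{\gamma^2 + \tfrac{2\gamma(\gamma+2)\ln(\gamma+1)}{J}}\Bigr). \nonumber
\end{align}
I would then argue that only the ``$+$'' root is admissible: since $\gamma>0$ implies $\ln(\gamma+1)>0$, the ``$-$'' root is negative and violates the constraint $\epsilon>0$, whereas the ``$+$'' root is automatically positive and in fact lies between the conditional means $\mu_0=1$ and $\mu_1=\gamma+1$, which is the geometrically sensible location for a detection threshold between the two Gaussian clusters shown in Fig.~\ref{fig:Fig3}. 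This gives the claimed expression~\eqref{eq:opt_threshold}.

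Next I would verify that this critical point is indeed a minimum rather than a maximum. The cleanest argument is to observe that $P_{\sf e}(\epsilon)\to \tfrac12$ as $\epsilon\to 0^+$ (where $\calQ(\sqrt{J}(\epsilon-1))\to \calQ(-\sqrt{J})$ and $\calQ(\sqrt{J})\to 0$, giving approximately $\tfrac12$) and $P_{\sf e}(\epsilon)\to \tfrac12$ as $\epsilon\to\infty$, while $P_{\sf e}(\epsilon)<\tfrac12$ at any interior point in $(1,\gamma+1)$; combined with the fact that the stationarity equation has a unique positive solution, the interior critical point must be the global minimizer. Finally, the minimum BER~\eqref{eq:min_BER} is obtained simply by substituting $\epsilon^{\star}$ back into~\eqref{eq:BER_def}; no further simplification is available in closed form because the two $\calQ$-function arguments do not coincide in general.

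The main obstacle is purely algebraic bookkeeping: the expansion of $(\epsilon-1)^2-(1-\epsilon/(\gamma+1))^2$ and the subsequent normalization of the quadratic produces several $\gamma$-dependent coefficients whose common factor must be tracked carefully in order to recover the compact form~\eqref{eq:opt_threshold}. Root selection (justifying that the ``$-$'' branch is extraneous rather than a second local minimum) is also a small but non-automatic step that must be addressed explicitly.
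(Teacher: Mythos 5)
Your proof is correct, and its computational core is identical to the paper's: differentiating $P_{\sf e}(\epsilon)$ gives $P_{\sf e}'(\epsilon)=\frac{1}{2}\left(p(R|_{B=1})-p(R|_{B=0})\right)$ evaluated at $\epsilon$, so your stationarity condition is exactly the PDF-intersection condition the paper imposes directly in~\eqref{eq:PDF_equation} (via Fig.~\ref{fig:Fig3}); the log-equation, the quadratic in $\epsilon$, the quadratic formula, and the choice of the ``$+$'' root then match the paper's algebra step for step (the paper merely keeps generic $\mu_0,\mu_1,\sigma_1^2$ and $C=(\gamma+1)^2$ before substituting, while you substitute first). The one genuine improvement over the paper is your explicit global-optimality argument --- $P_{\sf e}(0^+)=P_{\sf e}(\infty)=\frac{1}{2}$, interior values strictly below $\frac{1}{2}$, and uniqueness of the positive critical point --- whereas the paper justifies minimality only by the pictorial intersection argument; your version closes that gap rigorously. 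One small inaccuracy to fix: your parenthetical claim that the ``$+$'' root ``in fact lies between the conditional means $\mu_0=1$ and $\mu_1=\gamma+1$'' is false in general. A short computation shows $\epsilon^{\star}\leq \gamma+1$ if and only if $\frac{2\ln(\gamma+1)}{J}\leq \gamma^2$, which fails for small $J$ (e.g., $J=1$, $\gamma=1$ gives $\epsilon^{\star}=\frac{2}{3}\bigl(1+\sqrt{1+6\ln 2}\bigr)\approx 2.18>2$); the containment holds only asymptotically in $J$, consistent with Remark~\ref{rem:Thms}. Since nothing in your argument depends on this aside --- positivity alone discards the ``$-$'' branch, and your boundary analysis establishes global minimality regardless of where $\epsilon^{\star}$ sits relative to $\mu_1$ --- the proof stands once that sentence is removed or weakened to the large-$J$ statement.
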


\begin{proof}
Please refer to Appendix~\ref{app:theorem_ML_BER_SA}.
\end{proof}


\begin{myrem} \label{rem:Thms}
From Theorem~\ref{thm1}, we have two observations. First, the optimal detection threshold $\epsilon^{\star}$ decreases as the repeating length $J$ increases. Also, $\epsilon^{\star} \rightarrow \frac{2(\gamma +1)}{\gamma+2}$, as $J \rightarrow \infty$. It can be easily checked that the limiting threshold $\epsilon^{\star}$ is located in the closed interval between the two conditional mean values $\mu_0$ and $\mu_1$, due to $\gamma \geq 0$. Second, from Theorem~\ref{thm1}, we observe that given $J$, the optimal detection threshold and the corresponding minimum BER depend on only the detection SNR $\gamma$, being irrelevant to the number of subcarriers $N$. This is because only the CP-induced repeating parts of the received signal is utilized to detect the BD information bits, without using other parts in each OFDM symbol period.
\end{myrem}

\begin{myrem}\label{rem:Difference}
\textcolor{black}{Although the derivation of BER and the detection threshold follows similar steps as standard binary detection~\cite{JGProakisMSalehi05}, our system model, BD signal design and test statistics construction are fundamentally different from existing literature on BER analysis for AmBC~\cite{QianGaoAmBCTWC16, WangGaoAmBCTCOM16}.}
\end{myrem}

\section{Parameter Estimation for AmBC Systems}\label{sec:parameter_est}
In this section, we present practical methods for estimating essential parameters for implementing AmBC systems. In practice, the BD and the receiver can estimate basic parameters from the received OFDM signals. For instance, they can estimate $N$ as the distance of two adjacent peaks of the autocorrelation of the received signal.

\subsection{Timing Estimation for BD}\label{sec:timing_ABD}
In the BTS phase, the BD is switched into information receiver mode and estimates the channel propagation delay $D_{\sf h}$, which is required for the proposed modulation at the BD as in~\eqref{eq:waveform_B1} and~\eqref{eq:waveform_B0}. For the case in which the BD knows the synchronization preamble in the ambient OFDM signals\footnote{This is practical in some scenarios. For instance, for WLAN systems with 802.11a standard~\cite{80211a}, fixed frequency-domain OFDM symbols generate a synchronization preamble which consists of several identical training symbols.}, it can estimate $D_{\sf h}$ accurately by using traditional estimation methods such as cross-correlation based method or other algorithms reviewed in~\cite{MorelliKuoProceed07}.

For the case in which the BD does not know the synchronization preamble, it can still use its received incident signal $c[n]$ to estimate $D_{\sf h}$ by utilizing the repeating CP structure. Specifically, the BD performs autocorrelation for the received signals within the time window of $K_1$ ($K_1 \geq1$) OFDM symbol periods, i.e., $T_{\sf b}=K_1 (N+N_{\sf c})$, and estimates $D_{\sf h}$ as
\begin{align}\label{eq:Dh_est}
  \widehat{D_{\sf h}}  = \frac{1}{K_1} \underset{d=0, \ldots, N_{\sf c}-1}{\arg \max} \sum \limits_{k=0}^{K_1 - 1} \sum \limits_{n=0}^{N_{\sf c}-1} \frac{\big| c\left[n+d+k(N+N_{\sf c})\right] c^{\ast}\left[n+d+N+k(N+N_{\sf c})\right]\big|}{\Big|c\big[n+d+N+k(N+N_{\sf c})\big]\Big|^2}.
\end{align}

\subsection{Parameter Estimation for Receiver}\label{sec:timing_receiver}
We first consider timing synchronization for the case in which the receiver knows the synchronization preamble in the ambient OFDM signals. The receiver can use cross-correlation based method to estimate the direct-link propagation delay $D_{\sf f}$ and the backscatter-link propagation delay $D_{\sf b}$ accurately, in the BTS phase and the TPT phase, respectively. Then it can obtain an estimate of $D=\min \{D_{\sf f}, D_{\sf b} \}$. While in the TPT phase, it also obtains the estimated maximum channel spread $L$ by estimating the maximum multi-path delay in the frequency domain~\cite{MorelliKuoProceed07}, in which the backscatter link is treated as additional multi-path.

Second, we consider timing synchronization for the case in which the receiver does not know the synchronization preamble. In the TPT phase, the BD is switched into backscattering mode and the receiver estimates the minimum channel propagation delay $D$ and the maximum channel spread $L$. For convenience, we set $T_{\sf t}=K_2 (N+N_{\sf c})$, and choose the training preamble sent by the BD as $x_{\sf t}[n]=1, \; \text{for} \quad n=0, \ldots, T_{\sf t}-1$. The receiver can estimate $D$ by autocorrelating the received signal $y[n]$, which is similar to the estimation of $D_{\sf h}$ as in~\eqref{eq:Dh_est} and thus omitted herein.

Utilizing the repeating structure of $y_{\sf d}[n]$  in~\eqref{eq:ydn_repeat} and $y_{\sf b}[n]$  in~\eqref{eq:ybn_repeat}, we construct the metric 
\begin{align}\label{eq:Lb_est_metrc}
  Q[l] = \frac{1}{K_2(N_{\sf c} - l)}\sum \limits_{k=0}^{K_2 - 1} \sum \limits_{n=0}^{N_{\sf c}-1-l} \Big| y[n+l+k(N+N_{\sf c})] - y[n+l+N+k(N+N_{\sf c})] \Big|^2.
\end{align}

For typical case of $N_{\sf c} \gg L$, $Q[l]$ is distributed as $Q[l] \sim \calC \calN \left( \widetilde{\mu}_u(l), \widetilde{\sigma}_u^2 (l) \right)$, with the mean value
\begin{align}
  \widetilde{\mu}_u^2 (l) =
  \left\{ \begin{array}{cl}
  \frac{2\sigma^2 }{N_{\sf c} - l} \left[ (L_{\sf f}-l) (\gamma_{\sf d}+\gamma) + (L_{\sf b}-L_{\sf f}) \gamma + N_{\sf c}-L_{\sf b} \right], \quad &\text{if} \; l < L_{\sf f} < L_{\sf b}\\
  \frac{2\sigma^2 }{N_{\sf c} - l} \left[ (L_{\sf b}-l) (\gamma_{\sf d}+\gamma) + (L_{\sf f}-L_{\sf b}) \gamma_{\sf d} + N_{\sf c}-L_{\sf f} \right], \quad &\text{if} \; l <  L_{\sf b} < L_{\sf f}\\
  \frac{2\sigma^2 }{N_{\sf c} - l} \left[(L_{\sf b}-l) \gamma  + N_{\sf c} - L_{\sf b} \right] , \quad &\text{if} \;  L_{\sf f} < l <  L_{\sf b} \\
  \frac{2\sigma^2 }{N_{\sf c} - l} \left[(L_{\sf f}-l) \gamma_{\sf d}  + N_{\sf c} - L_{\sf f} \right] , \quad &\text{if} \;  L_{\sf b} < l <  L_{\sf f} \\
  2\sigma^2, \quad &\text{if} \; l \geq L. \\
    \end{array}
  \right.
\end{align}
and the variance value
\begin{align}
  \widetilde{\sigma}_u^2 (l) =
  \left\{ \begin{array}{cl}
  \frac{4\sigma^4}{K_2 (N_{\sf c} - l)^2} \left[ (L_{\sf f} - l) (\gamma_{\sf d}+\gamma+1)^2 + (L_{\sf b} - L_{\sf f}) (\gamma+1)^2  + N_{\sf c}-L_{\sf b}\right], \quad &\text{if} \; l < L_{\sf f} < L_{\sf b}\\
  \frac{4\sigma^4}{K_2 (N_{\sf c} - l)^2} \left[ (L_{\sf b} - l) (\gamma_{\sf d}+\gamma+1)^2 + (L_{\sf f} - L_{\sf b}) (\gamma_{\sf d}+1)^2  + N_{\sf c}-L_{\sf f}\right], \quad &\text{if} \; l <  L_{\sf b} < L_{\sf f}\\
  \frac{4\sigma^4}{K_2 (N_{\sf c} - l)^2} \left[ (L_{\sf b} - l) (\gamma+1)^2 + N_{\sf c}-L_{\sf b}\right], \quad &\text{if} \;  L_{\sf f} < l <  L_{\sf b} \\
  \frac{4\sigma^4}{K_2 (N_{\sf c} - l)^2} \left[ (L_{\sf f} - l) (\gamma_{\sf d}+1)^2 + N_{\sf c}-L_{\sf f}\right], \quad &\text{if} \;  L_{\sf b} < l <  L_{\sf f} \\
  \frac{4\sigma^4}{K_2 (N_{\sf c} - l)}, \quad &\text{if} \; l \geq L.
    \end{array}
  \right.
\end{align}

Clearly, when $l \geq L$, the metric $Q[l]$ is almost zero due to pure noise remaining in the difference signal; otherwise, $Q[l]$ is relatively large due to signal components. Therefore, the maximum channel spread $L$ can be estimated by using Algorithm~\ref{Algorithm_L}, shown in the above table.

\begin{algorithm}[t!]
	\caption{: Algorithm for estimating $L$} \label{Algorithm_L}
\begin{algorithmic}[1]
\STATE Initialization: some positive constant $\epsilon.$ \\
\FOR{$l=0,\ldots,N_{\sf c}-1$}
\STATE Compute $Q[l]$ in~\eqref{eq:Lb_est_metrc}.
\IF{$Q [l] \leq 2 \epsilon \sigma^2$, }
\STATE Obtain $\widehat{L} = l$.
\ENDIF
\ENDFOR
\RETURN $\widehat{L}$.
\end{algorithmic}
\end{algorithm}

Third, we consider the estimation of the average signal power $\sigma_{u}^2$ when the BD is backscattering. Using the estimated $\widehat{L}$, the parameter $\sigma_{u}^2$ is estimated as
\begin{align}
  \widehat{\sigma}_{u}^2=\frac{1}{N_{\sf c}} \sum \limits_{n=0}^{N_{\sf c}-1} \left| y[n+\widehat{L}] - y[n+\widehat{L}+N] \right|^2.
\end{align}

\section{Transceiver Design for Multi-Antenna Receiver}\label{sec:MA_extension}
In this section, we study the transceiver design for an AmBC system with a receiver equipped with $M$ antennas, for $M>1$. We use the same BD waveform (i.e., transmitter) design as that for the single-antenna system studied in Section~\ref{sec:OptDetection}, to maintain the ability of direct-link interference cancellation and information decoding by leveraging the repeating structure of the relevant signals due to CP. We thus focus on the optimal receiver design for the case of multi-antenna receiver in this section. The objective of the multi-antenna receiver is to detect the BD information bits, by processing the signals received at $M$ antennas jointly and then making the final decision based on the calculated statistic. In the following, the test statistic is first constructed, and then the optimal detector is obtained.

\subsection{Construction of Test Statistic}\label{sec:MA_test_statistics}
Similar to Section~\ref{subsec: TX_design}, we construct the following intermediate signal based on the signal received by the $m$-th antenna, $m=1, \cdots, M$, at the receiver,
\begin{align}\label{eq:zn_MA}
  z_m[n] \!\triangleq \! y_m[n] \!-\! y_m(n \!+\! N)
  \!&=\! \left\{ \begin{array}{cl}
  \! \! v_m[n], \! \! &\text{if} \; B=0,  \\
  \! \! u_m[n] \!+\! v_m[n], \! \! &\text{if} \; B=1,  \\
  \end{array}
  \right.
\end{align}
where the signal $u_m[n] = 2 \alpha g_m \sqrt{p} \sum \nolimits_{l=0}^{L_{\sf h}-1} s(n-l) h(l)$ and the noise $v_m[n] =w_m[n]-w_m(n+N)$. The signal $u_m[n]$ follows CSCG distribution with zero mean and variance given by
 \begin{align}
  \sigma_{u,m}^2 = 4 p |\alpha|^2 |g_m|^2 \sum \limits_{l=0}^{L_{\sf h}-1} \left| h[l] \right|^2.
 \end{align}
And the noise follows that $v_m[n] \sim \calC \calN(0, \sigma_v^2)$. We define the detection SNR of the $m$-th antenna as
\begin{align}
  \gamma_m \triangleq \frac{\sigma_{u,m}^2} {\sigma_v^2}= \frac{2 p |\alpha|^2 |g_m|^2 \sum \nolimits_{l=0}^{L_{\sf h}-1} \left| h[l] \right|^2}{\sigma^2}.
\end{align}

We then construct the following test statistic for each receiver antenna $m$,
\begin{align}
  R_m&= \frac{1}{J \sigma_v^2} \sum \limits_{n=L-1}^{N_{\sf c}+D-1} \left| z_m [n] \right|^2.
\end{align}

We further construct the following test statistic for the final decision,
\begin{align}
  \tilR&= \sum \limits_{m=1}^M \theta_m R_m,
\end{align}
where the combination weights $\theta_m$'s are subject to $\sum \limits_{m=1}^M \theta_m^2 =1$ and $\theta_m \geq 0, \ \forall m$. Denote $\bm{\theta}=[\theta_1 \ \theta_2 \ \cdots \ \theta_M]^T$.

We assume that the channels for different antennas at the receiver are mutually independent. Similar to Lemma~\ref{lem1}, when the repeating length $J$ is large, the conditional distribution of the test statistic $\tilR$ is given by
\begin{align}\label{eq:cond_PDF_MA}
  \tilR = \left\{ \begin{array}{cl}
  \tilR|_{B=0} \sim \calN \left( \widetilde{\mu}_0, \widetilde{\sigma}_0^2\right), \quad &\text{if} \; B=0,  \\
  \tilR|_{B=1} \sim \calN \left( \widetilde{\mu}_1, \widetilde{\sigma}_1^2\right), \quad &\text{if} \; B=1, \\
  \end{array}
  \right.
\end{align}
where the values of mean are
\begin{align}
  \widetilde{\mu}_0 = \widetilde{\mu}_0 (\bm{\theta}) = \sum \limits_{m=1}^M \theta_m, \quad \widetilde{\mu}_1 =\widetilde{\mu}_1 (\bm{\theta}) = \sum \limits_{m=1}^M \theta_m (\gamma_m+1), \label{eq:mu_01_MA}
\end{align}
and the values of variance are
\begin{align}
  \widetilde{\sigma}_0^2 =\widetilde{\sigma}_0^2(\bm{\theta}) = \frac{1}{J}, \quad \widetilde{\sigma}_1^2 = \widetilde{\sigma}_1^2 (\bm{\theta}) = \frac{1}{J}\sum \limits_{m=1}^M \theta_m^2 (\gamma_m +1)^2. \label{eq:sigma2_01_MA}
\end{align}

\vspace{-0.8cm}
\subsection{Optimal Detector Design}\label{sec: MA_opt_detector}
For the optimal receiver design, the objective is to find the optimal combination weights $\bm{\theta}$ and detection threshold $\tilepsilon$, such that the following BER is minimized.
\begin{align}\label{eq:BER_MA}
  \tilP_{\sf e} (\bm{\theta}, \tilepsilon) &= \frac{1}{2} P_{\sf fa} (\bm{\theta}, \tilepsilon)  + \frac{1}{2} P_{\sf md} (\bm{\theta}, \tilepsilon) = \frac{1}{2} \calQ \left( \sqrt{J} \left(\tilepsilon - \tilmu_0(\bm{\theta})\right) \right)  + \frac{1}{2} \calQ \left( \frac{\tilmu_1(\bm{\theta}) - \tilepsilon}{\sqrt{\tilsigma_1^2(\bm{\theta})}}\right).
\end{align}
That is, the optimization problem can be formulated as follows
\begin{subequations}\label{eq:opt2}
\begin{align}
   \textrm{(P2)} \ \ \ \underset{\bm{\theta}, \ \tilepsilon}{\min} \ \
    &\tilP_{\sf{e}} (\bm{\theta}, \tilepsilon) = \frac{1}{2} \calQ \left( \sqrt{J} \left(\tilepsilon - \tilmu_0(\bm{\theta})\right) \right)  + \frac{1}{2} \calQ \left( \frac{\tilmu_1(\bm{\theta}) - \tilepsilon}{\sqrt{\tilsigma_1^2(\bm{\theta})}}\right) \\
    \quad \text{s. t.} \ \
            &\sum_{m=1}^M \theta_m^2 = 1,  \\
        &\theta_m \geq 0, \;\; m=1, \; \cdots \; M \\
        &\tilepsilon > 0.
\end{align}
\end{subequations}

It is difficult to solve $\textrm{(P2)}$ directly, due to the complicated interplay between $\bm{\theta}$ and $\tilepsilon$ in the $\calQ$-function. Before simplifying $\textrm{(P2)}$, we have the following proposition.
\begin{mypro}\label{pro:cond_threshold}
Given $\bm{\theta}$, the optimal threshold for minimizing BER is given by
\begin{align}\label{eq:opt_threshold_MA}
  \tilepsilon^{\star} (\bm{\theta}) = \frac{\tilmu_0 (\bm{\theta}) \tilsigma_1^2 (\bm{\theta}) - \tilmu_1 (\bm{\theta}) + \sqrt{ \tilsigma_1^2 (\bm{\theta}) (\tilmu_1 (\bm{\theta}) - \tilmu_0 (\bm{\theta}))^2 + \tilsigma_1^2 (\bm{\theta}) (\tilsigma_1^2 (\bm{\theta}) - 1) \log(\tilsigma_1^2 (\bm{\theta}))}}{\tilsigma_1^2 (\bm{\theta}) - 1}.
\end{align}
And the corresponding minimum BER is given by
\begin{align}\label{eq:min_BER_MA1}
  \tilP_{\sf{e}, \min} (\bm{\theta}) &= \frac{1}{2} \calQ\left( f_1(\bm{\theta}) \right) + \frac{1}{2} \calQ\left( f_2(\bm{\theta}) \right),
\end{align}
where the two positive-valued functions of $\bm{\theta}$ in the above are given by
\begin{align}
  f_1(\bm{\theta}) &= \sqrt{J} \left(\tilepsilon^{\star} (\bm{\theta}) - \tilmu_0 (\bm{\theta}) \right), \label{eq:f1}\\ 
  f_2(\bm{\theta}) &= \frac{ \tilmu_1 (\bm{\theta}) - \tilepsilon^{\star} (\bm{\theta})}{\sqrt{\tilsigma_1^2 (\bm{\theta})}},  \label{eq:f2}
\end{align}
with $\tilmu_0 (\bm{\theta}), \tilmu_1 (\bm{\theta})$ and $\tilsigma_1^2 (\bm{\theta})$ given in~\eqref{eq:mu_01_MA} and~\eqref{eq:sigma2_01_MA}, respectively.
\end{mypro}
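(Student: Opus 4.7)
The plan is to fix $\bm{\theta}$ and minimize $\tilP_{\sf e}(\bm{\theta}, \tilepsilon)$ as a one-dimensional function of the scalar $\tilepsilon$ via the first-order optimality condition. First I would differentiate the BER expression in \eqref{eq:BER_MA} with respect to $\tilepsilon$, using $\frac{d}{dx}\calQ(x) = -\frac{1}{\sqrt{2\pi}}e^{-x^2/2}$ together with the chain rule on both $\calQ$-terms. Setting the derivative to zero reduces the stationarity condition to a balance between two Gaussian densities,
\begin{equation}
\frac{1}{\sqrt{\tilsigma_0^2(\bm{\theta})}}\exp\!\left(-\frac{(\tilepsilon - \tilmu_0(\bm{\theta}))^2}{2\tilsigma_0^2(\bm{\theta})}\right) = \frac{1}{\sqrt{\tilsigma_1^2(\bm{\theta})}}\exp\!\left(-\frac{(\tilmu_1(\bm{\theta}) - \tilepsilon)^2}{2\tilsigma_1^2(\bm{\theta})}\right),
\end{equation}
with $\tilsigma_0^2(\bm{\theta}) = 1/J$ from \eqref{eq:sigma2_01_MA}. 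This is precisely the multi-antenna analogue of the ML balance point used to derive Theorem~\ref{thm1} in the single-antenna case.

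Next I would take logarithms on both sides, clear denominators by multiplying through by $2\,\tilsigma_0^2\tilsigma_1^2$, and collect powers of $\tilepsilon$. This produces the quadratic equation
\begin{equation}
\bigl(\tilsigma_1^2 - \tilsigma_0^2\bigr)\tilepsilon^2 - 2\bigl(\tilsigma_1^2\tilmu_0 - \tilsigma_0^2\tilmu_1\bigr)\tilepsilon + \bigl(\tilsigma_1^2\tilmu_0^2 - \tilsigma_0^2\tilmu_1^2\bigr) - \tilsigma_0^2\tilsigma_1^2\log\!\frac{\tilsigma_1^2}{\tilsigma_0^2} = 0.
\end{equation}
Applying the quadratic formula, the discriminant simplifies dramatically via the identity $\bigl(\tilsigma_1^2\tilmu_0 - \tilsigma_0^2\tilmu_1\bigr)^2 - (\tilsigma_1^2 - \tilsigma_0^2)(\tilsigma_1^2\tilmu_0^2 - \tilsigma_0^2\tilmu_1^2) = \tilsigma_0^2\tilsigma_1^2(\tilmu_1 - \tilmu_0)^2$, which I would verify by direct expansion. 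The resulting closed-form solution, after substituting $\tilsigma_0^2 = 1/J$ and tidying, is exactly the expression \eqref{eq:opt_threshold_MA}.

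Between the two roots of the quadratic, I would select the physically relevant one. Since at least one $\gamma_m > 0$ ensures $\tilmu_1(\bm{\theta}) > \tilmu_0(\bm{\theta})$ and $\tilsigma_1^2(\bm{\theta}) > \tilsigma_0^2(\bm{\theta})$, the leading coefficient is positive and the plus branch of the square root gives the unique $\tilepsilon^{\star}(\bm{\theta})$ lying between the two conditional means. A second-derivative check (or the observation that $\tilP_{\sf e}(\bm{\theta},\tilepsilon) \to \tfrac{1}{2}$ as $\tilepsilon \to \pm\infty$, so the unique interior critical point must be a global minimum) confirms optimality. Finally, substituting $\tilepsilon^{\star}(\bm{\theta})$ back into \eqref{eq:BER_MA} and identifying the two resulting $\calQ$-function arguments as $f_1(\bm{\theta})$ and $f_2(\bm{\theta})$ per \eqref{eq:f1}--\eqref{eq:f2} delivers \eqref{eq:min_BER_MA1}.

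The main obstacle is algebraic rather than conceptual: recognizing the factorization of the discriminant that collapses a lengthy expression into the compact $\tilsigma_0^2\tilsigma_1^2[(\tilmu_1-\tilmu_0)^2 + (\tilsigma_1^2 - \tilsigma_0^2)\log(\tilsigma_1^2/\tilsigma_0^2)]$, and then carefully tracking normalization factors of $J$ when passing back to the $\tilsigma_0^2 = 1/J$ form. Positivity of $\tilepsilon^{\star}(\bm{\theta})$ (needed for the constraint $\tilepsilon > 0$ in (P2)) is immediate from $\tilepsilon^{\star}(\bm{\theta}) \geq \tilmu_0(\bm{\theta}) > 0$, so that constraint is not binding. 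The remainder of the verification is routine back-substitution.
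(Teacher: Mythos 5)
Your proposal is correct and follows essentially the same route as the paper: the paper's proof simply asserts that the optimal threshold is the intersection point of the two conditional Gaussian PDFs and invokes the same log-then-quadratic computation as Theorem~\ref{thm1}, which is exactly what your first-order condition, discriminant factorization, root selection, and global-optimality check make explicit. One bookkeeping caveat (about the paper's statement, not your method): with $\tilsigma_0^2(\bm{\theta}) = 1/J$ your quadratic actually yields \eqref{eq:opt_threshold_MA} only under the paper's implicit normalization $\tilsigma_0^2 = 1$, i.e., reading $\tilsigma_1^2$ there as the variance ratio $J\tilsigma_1^2(\bm{\theta}) = \sum_{m=1}^M \theta_m^2(\gamma_m+1)^2$ (consistent with $f_1^2 = f_2^2 + \ln(\tilsigma_1^2)$ in \eqref{eq:f12} and with the explicit $1/J$ inside the square root of \eqref{eq:opt_threshold} in the single-antenna case), so the claim that your expression "is exactly" \eqref{eq:opt_threshold_MA} requires tracking that factor of $J$ in the logarithmic term.
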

\begin{proof}
  It is noted that given $\bm{\theta}$, the BER is minimized when the threshold $\tilepsilon (\bm{\theta})$ is chosen as the intersection point of the two conditional PDFs $p(\tilR |_{B=0}, \bm{\theta})$ and $p(\tilR |_{B=1}, \bm{\theta})$. Following the same steps as the proof of Theorem~\ref{thm1}, this proposition is proved.
\end{proof}
From the condition $p(\tilR |_{B=0}, \bm{\theta}) = p(\tilR |_{B=1}, \bm{\theta})$, we further have
\begin{align}\label{eq:f12}
  f_1^2(\bm{\theta})=f_2^2(\bm{\theta}) + \ln\left(\tilsigma_1^2 (\bm{\theta})\right).
\end{align}

From Proposition~\ref{pro:cond_threshold}, the optimal combination weights $\bm{\theta}$ are chosen to minimize the conditional BER $\tilP_{\sf{e}, \min} (\bm{\theta})$ in~\eqref{eq:min_BER_MA1}. From~\eqref{eq:f12}, Problem $\textrm{(P2)}$ is equivalent to the following optimization problem,
\begin{subequations}\label{eq:opt2}
\begin{align}
   \textrm{(P2-Equi.)}\ \ \ \underset{\bm{\theta}}{\min} \ \
    &\tilP_{\sf{e}, \min} (\bm{\theta}) = \frac{1}{2} \calQ\left( \sqrt{f_2^2(\bm{\theta}) + \ln\left(\tilsigma_1^2 (\bm{\theta})\right)} \right) + \frac{1}{2} \calQ\left( f_2(\bm{\theta}) \right) \label{eq:opt2_obj}\\
    \quad \text{s. t.} \ \
        &\sum_{m=1}^M \theta_m^2 = 1,  \\
        &\theta_m \geq 0, \;\; m=1, \; \cdots \; M.
\end{align}
\end{subequations}
The objective function of $\textrm{(P2-Equi.)}$ is complicated, and the optimal $\bm{\theta}^{\star}$ can be obtained by $(M-1)$-dimensional search. \textcolor{black}{Since the number of receive antennas $M$ is small or moderate in practice, due to limited size of receiver, the complexity of $(M-1)$-dimensional search is acceptable.}

In Section~\ref{sec:simulation}, we will numerically compare the BER performance of the optimal combining weights $\bm{\theta}^{\star}$, to those of three traditional combining schemes including the maximum-ratio-combining (MRC), equal-gain-combining (EGC), and selection-combining (SC). The traditional schemes determine the combination weights based on only the SNRs of individual receive branches, thus are simpler to obtain. Numerical results will show that compared to the scheme using the optimal combination weights $\bm{\theta}^{\star}$, the MRC, EGC and SC schemes suffer from negligible SNR losses in terms of BER performance, thus are good suboptimal and low-complexity combining schemes in practice.

\section{Performance Analysis}\label{sec:detection_analysis}
In this section, we analyze the rate performance and the BER performance for the proposed transceiver design. 

\subsection{Rate Performance}\label{subsec:rate}
Recall the designed BD waveform in~\eqref{eq:waveform_B1} and~\eqref{eq:waveform_B0}. Since the time duration of each BD symbol is equal to $K (N+N_{\sf c})$ sampling periods, the BD rate is obtained as
\begin{align}\label{eq:rate}
  R_{\sf BD} = \frac{f_s}{K(N+N_{\sf c})}.
\end{align}

We observe that there is a trade-off between the data rate and the reliability, for different choices of BD symbol periods. For larger $K$, the BD data rate is lower, but the detection reliability at the receiver improves, since more signal samples are available for detection decision; and vice versa.

\subsection{BER Performance for Single-receive-antenna System}\label{subsec:BER_SA}
We analyze the effect of the CP length $N_{\sf c}$ and the channel spread $L$ on the BER performance for the single-antenna receiver case as follows.

From Theorem~\ref{thm1}, the minimum BER $P_{\sf{e}, \min}$ is rewritten as
\begin{align}\label{eq:min_BER1}
  P_{\sf{e}, \min} (J, \gamma) &= \frac{1}{2} \calQ\left( f_1 (J, \gamma) \right) + \frac{1}{2} \calQ\left( f_2 (J, \gamma) \right),
\end{align}
with the two quantities
\begin{align}
  f_1 (J, \gamma) &= \frac{\sqrt{J}}{\gamma (\gamma+2)} \left( (\gamma+1) \sqrt{\gamma^2 + \frac{2 \gamma (\gamma+2) \ln (\gamma+1)}{J}} - \gamma \right), \\
  f_2 (J, \gamma) &= \frac{\sqrt{J}}{\gamma (\gamma+2)} \left( \gamma^2 + \gamma - \sqrt{\gamma^2 + \frac{2 \gamma (\gamma+2) \ln (\gamma+1)}{J}} \right).
\end{align}
It can be further checked that $f_1 (J, \gamma) >0$ and $f_2 (J, \gamma) >0$.

From the condition $p(R |_{B=0}) = p(R |_{B=1})$, we have
\begin{align}\label{eq:f12_SA}
  f_1^2 (J, \gamma) =f_2^2 (J, \gamma) + 2 \ln\left(\gamma + 1 \right) > f_2^2 (J, \gamma).
\end{align}
The minimum BER $P_{\sf{e}, \min}$ is thus rewritten as
\begin{align}\label{eq:min_BER2}
  P_{\sf{e}, \min} (J, \gamma) &= \frac{1}{2} \calQ\left( \sqrt{f_2^2 (J, \gamma) + 2 \ln\left(\gamma + 1 \right)} \right) + \frac{1}{2} \calQ\left( f_2 (J, \gamma) \right).
\end{align}

Moreover, we have the following proposition on the effect of $N_{\sf c}$ and $L$ on the BER performance.
\begin{mypro}\label{pro:BER_D}
Given $\gamma$, the minimum BER $P_{\sf{e}, \min}$ decreases, as either $N_{\sf c}$ increases or $L$ decreases.
\end{mypro}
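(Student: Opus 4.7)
\noindent\textbf{Proof plan for Proposition~\ref{pro:BER_D}.}

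The starting observation is that $N_{\sf c}$ and $L$ enter the minimum BER expression \emph{only} through the repeating length $J = N_{\sf c} + D - L + 1$. Since $\partial J/\partial N_{\sf c} = 1 > 0$ and $\partial J/\partial L = -1 < 0$, the proposition reduces to the single claim that $P_{\sf e,\min}(J,\gamma)$ is strictly decreasing in $J$ for each fixed $\gamma > 0$. The plan is to prove this one-variable monotonicity.

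My preferred route is an envelope-type argument rather than grinding derivatives of $f_1,f_2$. Recall from Problem (P1) that
\[
P_{\sf e}(J,\gamma,\epsilon) = \tfrac{1}{2}\calQ\!\left(\sqrt{J}(\epsilon-1)\right) + \tfrac{1}{2}\calQ\!\left(\sqrt{J}\left(1-\tfrac{\epsilon}{\gamma+1}\right)\right),
\]
and that $\epsilon^\star(J)$ given in \eqref{eq:opt_threshold} is the unconstrained minimizer. By the envelope theorem,
\[
\frac{d P_{\sf e,\min}}{dJ} \;=\; \left.\frac{\partial P_{\sf e}}{\partial J}\right|_{\epsilon=\epsilon^\star(J)}.
\]
The partial derivative of each $\calQ$ term with respect to $J$ (at fixed $\epsilon$) carries the sign of the negative of the derivative of the argument. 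The argument of the first $\calQ$ is $\sqrt{J}(\epsilon-1)$, which is increasing in $J$ whenever $\epsilon > 1$; the argument of the second is $\sqrt{J}(1 - \epsilon/(\gamma+1))$, which is increasing in $J$ whenever $\epsilon < \gamma+1$. Thus, provided $\epsilon^\star(J) \in (1,\gamma+1)$, both $\calQ$ terms are strictly decreasing in $J$, and therefore so is $P_{\sf e,\min}$.

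The remaining step, which I expect to be the only real obstacle, is the verification that $\epsilon^\star(J)$ actually lies in the open interval $(1,\gamma+1)$. Plugging the closed-form \eqref{eq:opt_threshold} into $\epsilon^\star > 1$ and squaring reduces to $(\gamma+1)\sqrt{\gamma^2 + 2\gamma(\gamma+2)\ln(\gamma+1)/J} > \gamma$, which holds trivially since the radicand is already at least $\gamma^2$ and $\gamma+1 > 1$. The upper bound $\epsilon^\star < \gamma+1$ is the more delicate half: after squaring it simplifies to $J > 2\ln(\gamma+1)/\gamma^2$, a very mild condition that is compatible with the large-$J$ regime in which the CLT-based Lemma~\ref{lem1} is invoked in the first place. (As $\gamma \to 0$ the bound $2\ln(\gamma+1)/\gamma^2 \to 1/\gamma$, which stays finite for any fixed $\gamma>0$.) Once this interval containment is in place, the envelope identity above yields $dP_{\sf e,\min}/dJ < 0$, and composing with $dJ/dN_{\sf c} = +1$ and $dJ/dL = -1$ completes the proof.

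As a cross-check, one can rederive the same conclusion directly from the representation \eqref{eq:min_BER2}: both $f_2(J,\gamma)$ and $\sqrt{f_2^2(J,\gamma)+2\ln(\gamma+1)}$ are then shown to be strictly increasing in $J$. Writing $a = \gamma^2$, $b = 2\gamma(\gamma+2)\ln(\gamma+1)$, the factor $\sqrt{a+b/J}$ is decreasing in $J$, so $\gamma^2+\gamma-\sqrt{a+b/J}$ is increasing; combined with the increasing $\sqrt{J}$ prefactor (and the positivity condition on the bracket, which coincides with $\epsilon^\star<\gamma+1$ above), this gives monotonicity of $f_2$, and monotonicity of $f_1$ follows from \eqref{eq:f12_SA}. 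I would present the envelope version as the main argument and keep this algebraic check as a remark, since the envelope approach makes the role of the optimal threshold transparent.
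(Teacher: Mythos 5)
Your proof is correct, but your primary argument takes a genuinely different route from the paper's. The paper proves the proposition from the representation \eqref{eq:min_BER2}: it asserts in one line that $f_2(J,\gamma)$ is increasing in $J$, invokes the monotonicity of the $\calQ$-function (with $f_1$ handled implicitly through $f_1^2 = f_2^2 + 2\ln(\gamma+1)$ from \eqref{eq:f12_SA}), and concludes via $J = N_{\sf c}+D-L+1$ --- i.e., exactly your ``cross-check'' paragraph, which the paper states without carrying out the algebra. Your main envelope-theorem argument is different and arguably cleaner: holding $\epsilon = \epsilon^{\star}(J)$ fixed, both $\calQ$-arguments in (P1) are visibly increasing in $J$ once $\epsilon^{\star} \in (1,\gamma+1)$, so no differentiation of the composite $f_2$ is needed, and it makes transparent that the mechanism is the optimal threshold sitting strictly between the conditional means $\mu_0 = 1$ and $\mu_1 = \gamma+1$. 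A further merit of your version: your containment condition $J > 2\ln(\gamma+1)/\gamma^2$ is precisely the condition $f_2(J,\gamma) > 0$, which the paper asserts unconditionally (``it can be further checked that \ldots $f_2 > 0$'') but which actually fails for small $J$ at low SNR (e.g., $\gamma = 0.1$, $J = 10$); you have surfaced a hypothesis the paper's proof uses silently, and correctly observed it is harmless in the large-$J$ regime where Lemma~\ref{lem1} is valid in the first place. One trivial slip in an aside: as $\gamma \to 0$ one has $2\ln(\gamma+1)/\gamma^2 \sim 2/\gamma$ rather than $1/\gamma$; your actual conclusion (the bound is finite for any fixed $\gamma > 0$) is unaffected.
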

\begin{proof}
  It can be checked that given $\gamma$, $f_2(J, \gamma)$ is an increasing function of $J$. The minimum BER $P_{\sf{e}, \min}$ decrease as $J$ increases, due to the fact that $\calQ$-function is a decreasing function of its argument. Since $J = N_{\sf c} +D - L + 1$, this proposition is proved.
\end{proof}

Next, we analyze the effect of the detection SNR $\gamma$ on the BER performance. We focus on the typical case that the (decimal) SNR $\gamma$ is sufficiently large such that $\frac{1}{\gamma} \approx 0$.

From~\eqref{eq:min_BER1}, the minimum probability of false alarm $P_{\sf fa, \min}$ is approximated as 
\begin{align}\label{eq:opt_P_fa_app}
  P_{\sf{fa}, min} (J, \gamma) \approx   \calQ \left( \sqrt{J + 2 \ln (\gamma + 1)}\right),
\end{align}
and the minimum probability of missing detection $P_{\sf md, \min}$ is approximated as 
\begin{align}
  P_{\sf{md}, min} (J, \gamma)
  &\approx  \calQ \left( \sqrt{J}\right), \label{eq:opt_P_md_app2}
\end{align}
where the approximation in~\eqref{eq:opt_P_md_app2} is from the inequality $\log(1+x) \leq x, \ \forall x > -1,$  and the assumption that $\frac{1}{\gamma} \approx 0$.

Given $J$, the minimum probability of false alarm $P_{\sf fa, \min}$ dominates the minimum BER $P_{\sf e, \min} = \frac{1}{2}P_{\sf fa, \min} + \frac{1}{2} P_{\sf md, \min}$, since the minimum probability of missing detection $P_{\sf md, \min}$ approximately equals the constant $\calQ \left( \sqrt{J} \right)$. Note $P_{\sf fa, \min}$ decreases as $\gamma$ increases, due to the fact that the $\calQ$-function is a monotonically decreasing function. Thus we directly have the following proposition.
\begin{mypro}\label{pro:gamma}
  Given $N_{\sf c}$ and $L$, for a decimal SNR $\gamma$ that is sufficiently large such that $\frac{1}{\gamma} \approx 0$, the BER $P_{\sf e, min}$ decreases as the SNR $\gamma$ increases.
\end{mypro}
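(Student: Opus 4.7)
The plan is to build directly on the two large-$\gamma$ approximations already established immediately before the proposition, namely
\[
  P_{\sf{fa},\min}(J,\gamma) \;\approx\; \calQ\!\left(\sqrt{J + 2\ln(\gamma+1)}\right),
  \qquad
  P_{\sf{md},\min}(J,\gamma) \;\approx\; \calQ\!\left(\sqrt{J}\right),
\]
and to decompose the minimum BER as
\[
  P_{\sf e,\min}(J,\gamma) \;=\; \tfrac{1}{2}P_{\sf{fa},\min}(J,\gamma) + \tfrac{1}{2}P_{\sf{md},\min}(J,\gamma).
\]
First I would fix $N_{\sf c}$ and $L$ (hence fix $J = N_{\sf c}+D-L+1$, which is independent of $\gamma$), and then analyze the two terms separately.

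Next I would observe that under the regime $1/\gamma \approx 0$, the miss-detection term $P_{\sf{md},\min}(J,\gamma) \approx \calQ(\sqrt{J})$ is effectively independent of $\gamma$, so its contribution to $P_{\sf e,\min}$ is a constant floor. The whole variation with $\gamma$ therefore comes from the false-alarm term. For that term, the map $\gamma \mapsto \sqrt{J + 2\ln(\gamma+1)}$ is strictly increasing on $(0,\infty)$ because $\ln(\gamma+1)$ is, and the integrand of $J$ is held fixed. Combined with the fact that the $\calQ$-function is strictly monotonically decreasing on $\bbR$, this immediately gives that $P_{\sf{fa},\min}(J,\gamma)$ is strictly decreasing in $\gamma$. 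Summing the two halves then yields that $P_{\sf e,\min}(J,\gamma)$ is (approximately) strictly decreasing in $\gamma$ in the large-SNR regime, which is exactly the claim.

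The only real obstacle is cosmetic rather than technical: namely, carefully stating in what sense the conclusion holds, since both sides of the inequality are already approximations valid up to $o(1/\gamma)$ terms. I would handle this by noting that the error terms dropped in going from the exact formulas in~\eqref{eq:opt_P_fa_app} and~\eqref{eq:opt_P_md_app2} to the displayed approximations are uniform in $\gamma$ on any interval bounded away from zero, so the monotonicity of the approximation transfers to the exact $P_{\sf e,\min}$ on the corresponding high-SNR interval. No new estimates are needed; the entire proof is a one-line chain from the $\calQ$-function monotonicity once the miss-detection term is recognized as $\gamma$-independent.
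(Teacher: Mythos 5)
Your proposal is correct and follows essentially the same route as the paper: the paper likewise fixes $J$, observes that $P_{\sf md,\min}\approx\calQ(\sqrt{J})$ is a $\gamma$-independent constant under the high-SNR approximation, and concludes from the monotone decrease of the $\calQ$-function applied to $\calQ\bigl(\sqrt{J+2\ln(\gamma+1)}\bigr)$ that the false-alarm term, and hence $P_{\sf e,\min}$, decreases in $\gamma$. Your added remark on the uniformity of the dropped error terms is a refinement the paper omits, but it does not change the argument.
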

Clearly, Proposition~\ref{pro:gamma} coincides with the intuition that the BER decreases as the SNR $\gamma$ increases.

\subsection{BER Performance for Multiple-receive-antenna System}\label{subsec:BER_MA}
With the optimal combination weights $\bm{\theta}^{\star}$, the minimum BER is obtained from~\eqref{eq:opt2_obj} as
\begin{align}\label{eq:min_BER1_MA}
\tilde{\tilP}_{\sf{e}, \min} (\bm{\theta}^{\star}) &= \frac{1}{2} \calQ\left( \sqrt{f_2^2(\bm{\theta}^{\star}) + \ln\left(\tilsigma_1^2 (\bm{\theta}^{\star})\right)} \right) + \frac{1}{2} \calQ\left( f_2(\bm{\theta}^{\star}) \right).
\end{align}

\section{Numerical Results}\label{sec:simulation}
In this section, we provide simulation results to evaluate the performance of the proposed transceiver design. Suppose that the OFDM signal bandwidth or sampling frequency $f_{\sf s}$ is 10MHz. The channel taps are modeled as statistically independent Gaussian random variables with zero mean (Rayleigh fading) and an exponentially decaying power delay profile. We set $\tau_{\sf g}=1$, and assume that the channel gain $\bbE[|g_m|^2]=\frac{c^2}{4 \pi D_{\sf rd}^2 f_{\sf c}^2}, \forall m=1, \ \ldots, \ M$, where the light speed $c=3 \times 10^8$ meters per second, the carrier frequency $f_{\sf c} = 900$MHz, and the distance between the BD and the receiver $D_{\sf rd}=0.5$ meter (m) which implies the channel propagation delay $D_{\sf g}=\, \lfloor \frac{D_{\sf rd} f_{\sf s}}{c}\rfloor=0$. We set the channel parameters $D_{\sf f}=16, D_{\sf h}=16$, and $\tau_{\sf f}=4, \tau_{\sf h}=6$, which implies $D=16$ and $L=22$. \textcolor{black}{We set the reflection coefficient $\alpha=0.3 + 0.4 j$, which implies that about 25\% incident power is reflected by the BD.} We assume that the BD adopts binary phase shift keying (BPSK) modulation. We set the number of subcarriers $N=512$ and the CP length $N_{\sf c}=64$. The following numerical results are based on $10^8$ Monte Carlo simulations each with randomly generated channels.

\subsection{Timing Estimation for AmBC System}\label{sec:MSE_TE}
In this subsection, we evaluate the performance of timing estimation for AmBC systems. We set the parameter $\epsilon=1.5$ for Algorithm~\ref{Algorithm_L}.

Fig.~\ref{fig:MSE_timing} plots the normalized mean-square-error (MSE) versus the SNR $\gamma$, for different synchronization time $K_1$'s or $K_2$'s. For estimating $D_{\sf h}$ by the BD using the conventional autocorrelation based method, the MSE is about 0.01, since the SNR at the BD is about 20 dB larger than the SNR $\gamma$ at the receiver due to the small reflection coefficient $\alpha$ and the channel attenuation from the BD to the receiver. For autocorrelation based estimation of $D$ at the receiver, the MSE decreases quickly as the SNR $\gamma$ increases. We observe that the MSE curve has a floor for the SNR $\gamma > 15$dB, which verifies that the performance of autocorrelation based synchronization for OFDM signals is dominated by the delay profile and not sensitive to the noise level~\cite{JGProakisMSalehi05}. For estimating $L$ at the receiver, the MSE also decreases as the SNR $\gamma$ increases. In particular, we observe that for the SNR $\gamma=30$ dB, the normalized MSE is $0.016$, $0.01$, and $0.008$ for $K_2 =1$, $2$, and $3$, respectively.

\begin{figure}[t!]
\centering
\includegraphics[width=.65\columnwidth] {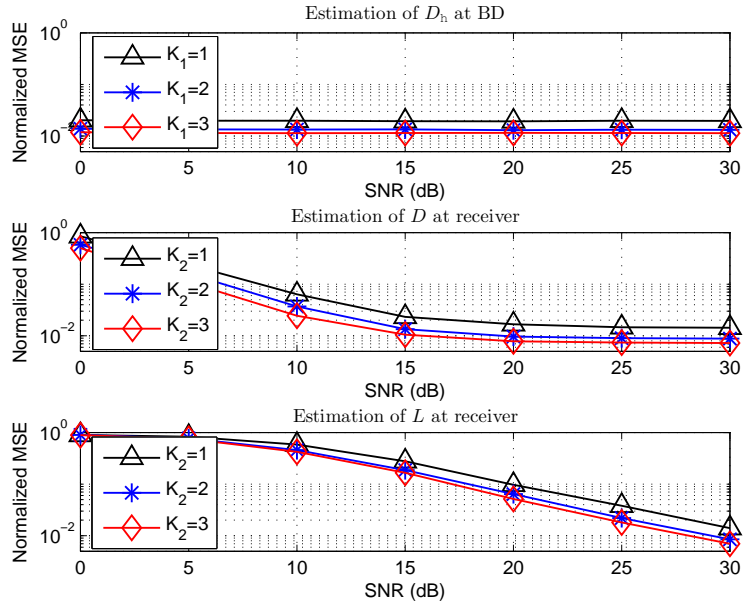}
\caption{Normalized MSE comparison for estimating $D_{\sf h}$, $D$ and $L$.}
	\label{fig:MSE_timing} 
\vspace{-0.8cm}
\end{figure}

For estimating $D_{\sf h}$, $D$ and $L$, we observe that the normalized MSE becomes smaller for longer synchronization time (i.e., larger $K_1$ or $K_2$), which implies that it is sufficient to use one or two OFDM symbols for timing synchronization in practice. Since we have $N_{\sf c} \gg D_{\sf h}$ and $N_{\sf c} \gg L$ in practice, the MSE performance achieved by the proposed method is sufficient for implementation. Thus we assume perfect timing synchronization at the BD and the receiver in the subsequent simulations.


\subsection{Performance Comparison for Case of Single-antenna Receiver}
In this subsection, we evaluate the BER performance for the case of a single-antenna receiver. For performance comparison, we consider the energy detector in~\cite{ABCSigcom13} as a benchmark, in which, for the case of $K=1$,  the BD reflects for bit `1' and keeps silent for bit `0', and the receiver detects the BD bit by distinguishing between two different energy levels of the received signal $y[n]$, given by
\begin{align}
  \hatR \triangleq \frac{1}{N+N_{\sf c}} \sum \limits_{n=0}^{N+N_{\sf c}-1} |y[n]|^2.
\end{align}
Differential coding is used in~\cite{ABCSigcom13} to exempt the receiver from knowing the extra mapping from the power levels to the bits.
\begin{figure}[t!]
\centering
\includegraphics[width=.65\columnwidth] {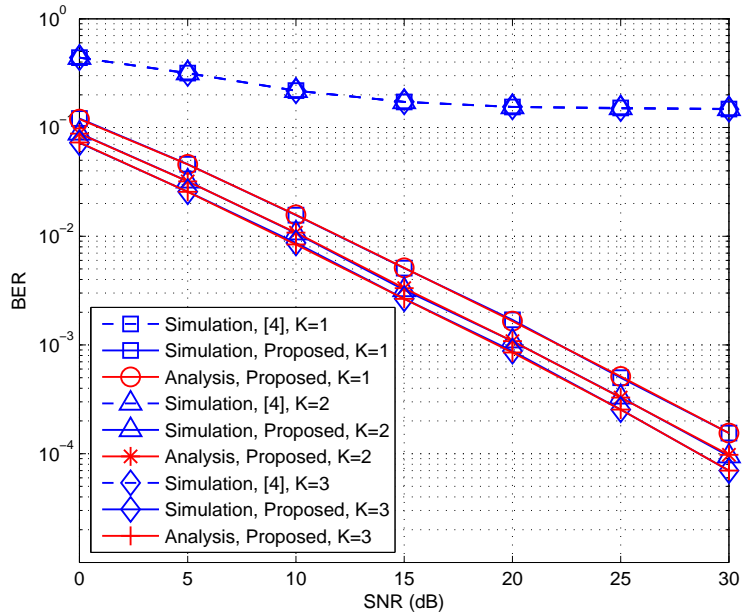}
\caption{BER comparison for both proposed and conventional designs.}
\label{fig:Fig4}
\vspace{-0.6cm}
\end{figure}

\subsubsection{Scenario of Fixed Distance between BD and Receiver}
We fix the distance between the BD and the receiver as $D_{\sf rd}=0.5$ m.

Fig.~\ref{fig:Fig4} plots the BER versus the average SNR, by using the proposed optimal detector and the conventional energy detector, for different BD symbol duration $K$'s. We observe that by using the proposed optimal detector, the BER for $K=1$ decreases dramatically from $0.12$ to $1.6 \times 10^{-4}$, as the average SNR increases from 0 dB to 30 dB. Moreover, simulation results verify the trade-off between the BD rate and the reliability of signal recovery. With the parameter setting in our simulations, the BD rate is 19.5 Kbps, 9.8 Kbps and 6.5 Kbps, for $K=1, \ 2$ and 3, respectively. We observe that the BER decreases for larger $K$. Specifically, for a BER level of $0.001$, the system achieves an SNR gain of $2$ dB and $3$ dB for $K=2$ and 3, respectively, compared to the case of $K=1$. Also, we observe that the simulated BERs coincide with the analytical BERs, which verifies Theorem~\ref{thm1}.

In contrast, by using the conventional energy detector, the BER decreases slowly, saturating at a high BER around $0.16$. This is explained as follows. The energy detector decodes the BD bit by treating the strong direct-link interference from the RF source as noise. Since the direct-link interference is typically much stronger than the backscattered signal, this results in very low decoding SNR and thus high BER floor.

\subsubsection{Scenario of Varying Distance between BD and Receiver}
In this example, we vary the distance $D_{\sf rd}$ between the BD and the receiver. We set $K=1$.

Fig.~\ref{fig:Fig4A} plots the BER versus the distance $D_{\sf rd}$, for both the proposed transceiver and the conventional energy detector~\cite{ABCSigcom13}. In general, the BER increases as $D_{\sf rd}$ increases. For the proposed transceiver design, the BER is around $0.001$, $0.01$ and $0.1$, for the distance $D_{\sf rd}=1.4, \ 4$ and 14 meters. While for the benchmark scheme, the BER is around $0.15$, $0.4$ and $0.5$, for the distance $D_{\sf rd}=0.5, \ 2$ and 6 meters. It is concluded that the proposed design enhances the BER performance as well as the operating range significantly, compared to the conventional energy detector.

\begin{figure}[t!]
\centering
\includegraphics[width=.65\columnwidth] {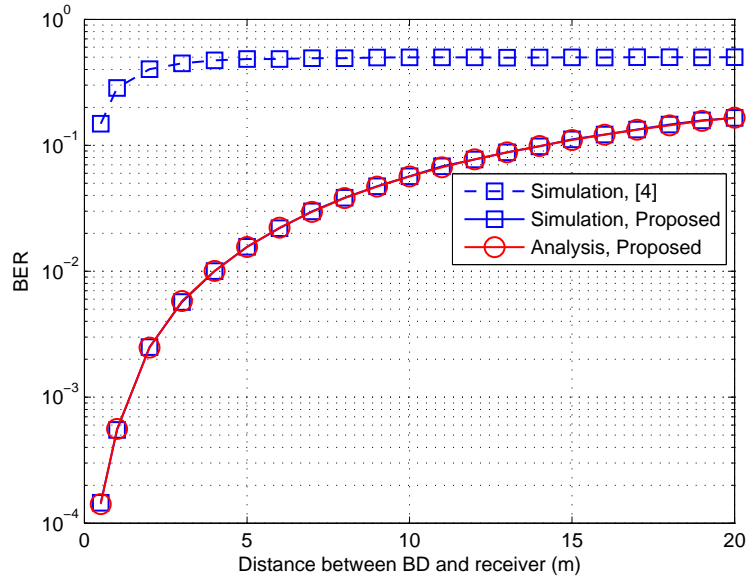}
\caption{BER comparison for different distances between BD and receiver.}
\label{fig:Fig4A}
\vspace{-0.8cm}
\end{figure}

\subsection{Performance Comparison for Case of Multi-antenna Receiver}
In this subsection, we evaluate the BER performance for the case of a multi-antenna receiver.
\begin{figure}[t!]
\centering
\includegraphics[width=.65\columnwidth] {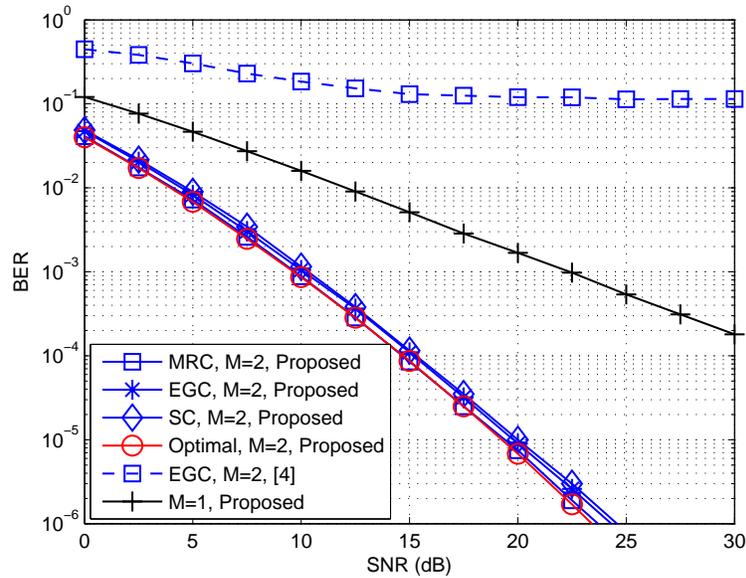}
\caption{BER comparison for different combining schemes, for $M=2$.}
\label{fig:Fig8}
\vspace{-0.8cm}
\end{figure}
Fig.~\ref{fig:Fig8} compares the BER of different combining schemes, for the number of receiver antennas $M=2$. The optimal combining weights are obtained by one-dimensional search, and the search-step is set to be 0.001. First, we observe that by using two antennas at the receiver, the proposed transceiver achieves an SNR gain of about 12 dB at a BER level of 0.001, compared to the case of a single-antenna receiver. This verifies that the receive diversity can decrease the BER significantly. Second, the BER performance difference for different combining schemes is small. In particular, at a BER level of $0.001$, compared to the optimal combining scheme, the MRC, EGC and SC schemes suffer from an SNR loss of 0.2 dB, 0.5 dB, and 0.6 dB, respectively. For the conventional energy detector, the BER decreases very slowly with SNR, saturating at a high BER around $0.11$.

\begin{figure}[!t]
\vspace{-0.1cm}
\centering
\includegraphics[width=.65\columnwidth]{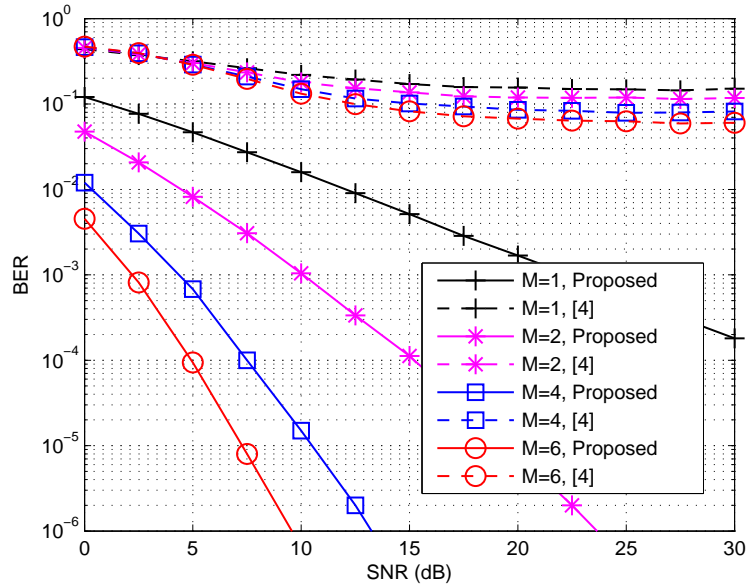}
\caption{BER comparison for different $M$'s, with EGC.}
\label{fig:Fig9}
\vspace{-1cm}
\end{figure}

The MRC and SC schemes require the SNR information of signals received at each antenna, which requires additional estimation. The above observations imply that the simple EGC scheme is suitable for the scenarios with unknown SNRs, since the EGC scheme does not require the SNR information. The EGC scheme can also reduce the computational complexity, as it avoids the search for optimal combination weights.

Fig.~\ref{fig:Fig9} compares the BER with different number of receiver antennas $M$'s, for the EGC scheme. First, we observe that the BER decreases quickly as $M$ increases. In particular, at a BER level of $0.001$, the system achieves an SNR gain of 12 dB, 18 dB, and 20 dB, for $M=$2, 4, 6, respectively, compared to the single-antenna case. This implies that the incremental SNR gain becomes smaller as $M$ increases. \textcolor{black}{We also observe that the BER improvement becomes less significant as $M$ increases. This is in accordance with the BER performance of receiver diversity via EGC~\cite{GoldsmithWC2005}.}  For the conventional energy detector~\cite{ABCSigcom13}, the BER decreases slowly as $M$ increases, due to the strong direct-link interference.

\section{Conclusions}\label{conslusion}
This paper has studied a new backscatter communication system over ambient OFDM carriers in the air. We first establish the system model for such system from a spread-spectrum modulation perspective, upon which a novel joint design for BD waveform and receiver detector is proposed. For the system with a single-antenna receiver, we construct a test statistic that cancels out the direct-link interference by exploiting the repeating structure of the relevant signals due to the use of CP, and propose the optimal maximum-likelihood detector to recover the BD information bits, for which the optimal detection threshold is obtained in closed-form expression. For the system with a multi-antenna receiver, we further construct a new test statistic and derive the corresponding optimal detector. To perform optimal detection, the receiver requires to estimate only the strength of the backscatter channel, instead of the complete information of the relevant channels, leading to reduced receiver complexity. We also propose efficient algorithms for timing synchronization in the considered AmBC system. The effect of various system parameters on the transmission rate and detection performance is analyzed. Simulation results have shown that the proposed design outperforms the conventional design based on energy detection, in terms of transmission rate, BER performance and operating range. Also, the results have shown that the proposed timing synchronization method is practically valid and efficient, and the deployment of multiple receive antennas at the receiver can enhance the BER performance significantly. The proposed transceiver design has great potential for applications in the next-generation low-power IoT systems.

\appendices
\section{Proof to Lemma \ref{lem1}} \label{App:Lemma1}
Under the condition of $B=1$, from CLT\cite{WalpoleProbbook2011}, we have that $R|_{B=1}  \sim \calN \left( \mu_1, \sigma_1^2\right)$. The mean value is first given by
  \begin{align}\label{eq:mu1_proof}
    \mu_1   &\triangleq \frac{1}{J \sigma_v^2} \sum \limits_{n=L-1}^{N_{\sf c}-1} \bbE \left[ \left| u[n] + v[n] \right|^2 \right] \nonumber \\
    &\eqa \frac{\bbE [|u[n]+v[n]|^2]}{\sigma_v^2} \nonumber \\
    &\eqb \frac{\bbE [|u[n]|^2] + \bbE [|v[n]|^2]}{\sigma_v^2} \nonumber \\
    &\eqc \gamma +1,
  \end{align}
where $(a)$ is from the fact that random variables $u[n]$'s are i.i.d. and $v[n]$'s are also i.i.d., $(b)$ is from the mutual independence between $u[n]$ and $v[n]$ for any $n$, and $(c)$ is from the facts that $\bbE \left[ |u[n]|^2 \right] = \gamma \sigma_v^2$ and $\bbE \left[ |v[n]|^2 \right] = \sigma_v^2$.

Next, the variance value is given by
  \begin{align}
    \sigma_1^2 &\triangleq \frac{1}{J^2 \sigma_v^4} \bbE \left[\left| \sum \limits_{n=L-1}^{N_{\sf c}-1} \left( |u[n]+v[n]|^2 - \bbE \left[ |u[n]+v[n]|^2 \right] \right) \right|^2 \right] \nonumber \\
    &\eqa \frac{1}{J^2 \sigma_v^4} \bbE \left[ \sum \limits_{n=L-1}^{N_{\sf c}-1} \left( |u[n]+v[n]|^2 - (\sigma_u^2 + \sigma_v^2) \right)^2 \right] \nonumber \\
    &\eqb \frac{\bbE \left[ \left( |u[n]+v[n]|^2 - (\sigma_u^2 + \sigma_v^2) \right)^2 \right]}{J \sigma_v^4} \nonumber \\
    &\eqc \frac{\bbE \left[ |u[n]|^4 \right] + \bbE \left[ |v[n]|^4 \right] - (\sigma_u^2 - \sigma_v^2)^2}{J \sigma_v^4} \nonumber \\
    &\eqd \frac{(\gamma+1)^2}{J},
  \end{align}
where $(a)$ is from~\eqref{eq:mu1_proof}, $(b)$ is from the fact that random variables $u[n]$'s are i.i.d. and $v[n]$'s are also i.i.d., $(c)$ is from the mutual independence between $u[n]$ and $v[n]$ for any $n$, and $(d)$ is from the facts that $\bbE \left[ |u[n]|^4 \right] = 2 \sigma_u^4,$ and $\bbE \left[ |v[n]|^4 \right] = 2 \sigma_v^4.$

Also, the distribution under the condition of $B=0$ can be proved in a similar way.

\section{Proof to Theorem~\ref{thm1}} \label{app:theorem_ML_BER_SA}
As can be seen from the shadow area in Fig.~\ref{fig:Fig3}, the BER is minimized when the threshold $\epsilon$ is chosen as the intersection point of the two conditional PDFs $p(R |_{B=0})$ and $p(R |_{B=1})$. From~\eqref{eq:cond_PDF}, the optimal $\epsilon^{\star}$ is thus the solution to the following equation,
\begin{align}\label{eq:PDF_equation}
  \frac{1}{\sqrt{2 \pi \sigma_0^2}} \exp \left( -\frac{(t \!- \!\mu_0)^2}{2 \sigma_0^2}\right) =   \frac{1}{\sqrt{2 \pi \sigma_1^2}} \exp \left( -\frac{(t \!-\! \mu_1)^2}{2 \sigma_1^2}\right).
\end{align}
Define the constant $C \triangleq (\gamma+1)^2$. After taking the logarithm on both sides of \eqref{eq:PDF_equation} and some manipulations, the equation~\eqref{eq:PDF_equation} is simplified as
\begin{align}
  \frac{C \!-\! 1}{2} T^2 \!+\! (\mu_1 \!-\! C \mu_0) T + \frac{C \mu_0^2 \!-\! \mu_1^2 \!-\! \sigma_1^2 \ln C}{2} =0.
\end{align}

Solving the above equation yields the optimal threshold
\begin{align}\label{eq:opt_threshold1}
  \epsilon^{\star} = \frac{C \mu_0 - \mu_1 + \sqrt{C (\mu_1- \mu_0)^2 + (C-1) \sigma_1^2 \ln C}}{C-1}.
\end{align}
Substituting~\eqref{eq:mu_01} and~\eqref{eq:sigma2_01} in \eqref{eq:opt_threshold1}, the optimal detection threshold is obtained as in~\eqref{eq:opt_threshold}, and the corresponding minimum BER is given in~\eqref{eq:min_BER}.

\renewcommand{\baselinestretch}{1.3}
\bibliography{IEEEabrv,reference1704}

\begin{thebibliography}{10}
\providecommand{\url}[1]{#1}
\csname url@samestyle\endcsname
\providecommand{\newblock}{\relax}
\providecommand{\bibinfo}[2]{#2}
\providecommand{\BIBentrySTDinterwordspacing}{\spaceskip=0pt\relax}
\providecommand{\BIBentryALTinterwordstretchfactor}{4}
\providecommand{\BIBentryALTinterwordspacing}{\spaceskip=\fontdimen2\font plus
\BIBentryALTinterwordstretchfactor\fontdimen3\font minus
  \fontdimen4\font\relax}
\providecommand{\BIBforeignlanguage}[2]{{%
\expandafter\ifx\csname l@#1\endcsname\relax
\typeout{** WARNING: IEEEtran.bst: No hyphenation pattern has been}%
\typeout{** loaded for the language `#1'. Using the pattern for}%
\typeout{** the default language instead.}%
\else
\language=\csname l@#1\endcsname
\fi
#2}}
\providecommand{\BIBdecl}{\relax}
\BIBdecl

\bibitem{YangLiangGC16}
G.~Yang and Y.-C. Liang, ``Backscatter communications over ambient {OFDM}
  signals: transceiver design and performance analysis,'' in \emph{Proc. of
  {IEEE} Globecom}, Washington DC, USA, Dec. 2016, pp. 1--6.

\bibitem{BiZhangComMag15}
S.~Bi, C.~K. Ho, and R.~Zhang, ``Wireless powered communication: opportunities
  and challenges,'' \emph{{IEEE} Commun. Mag.}, vol.~53, no.~4, pp. 117--125,
  Apr. 2015.

\bibitem{BiZhangWirelessCom16}
S.~Bi, Y.~Zeng, and R.~Zhang, ``Wireless powered communication networks: an
  overview,'' \emph{{IEEE} Wireless Commun.}, vol.~23, no.~4, pp. 10--18, Apr.
  2016.

\bibitem{ABCSigcom13}
V.~Liu, A.~Parks, V.~Talla, S.~Gollakota, D.~Wetherall, and J.~R. Smith,
  ``Ambient backscatter: wireless communication out of thin air,'' in
  \emph{Proc. of ACM SIGCOMM}, Hong Kong, China, Jun. 2013, pp. 1--13.

\bibitem{YCLiangTVT15}
Y.-C. Liang, K.-C. Chen, G.~Y. Li, and P.~Mahonen, ``Cognitive radio networking
  and communications: An overview,'' \emph{{IEEE} Trans. Veh. Technol.},
  vol.~60, no.~7, pp. 3386--3407, Sept. 2011.

\bibitem{Dobkinbook2007}
D.~M. Dobkin, \emph{The {RF} in {RFID}: Passive {UHF RFID} in Practice}.\hskip
  1em plus 0.5em minus 0.4em\relax Elsevier, 2007.

\bibitem{SampleSmith07}
A.~Sample, D.~Yeager, P.~Powledge, and J.~Smith, ``Design of a
  passively-powered, programmable sensing platform for {UHF RFID} systems,'' in
  \emph{Proc. of {IEEE} Int. Conf. on {RFID}}, Grapevine, TX, Mar. 2007, pp.
  149--156.

\bibitem{ParksSmith14}
A.~Parks and J.~Smith, ``Shifting through the airwaves: efficient and scalable
  multiband {RF} harvesting,'' in \emph{Proc. of IEEE Conf. RFID}, Orlando, FL,
  USA, Apr. 2014, pp. 74--81.

\bibitem{Ishizaki11}
H.~Ishizaki, H.~Ikeda, Y.~Yoshida, T.~Maeda, T.~Kuroda, and M.~Mizuno, ``A
  battery-less {WiFi-BER} modulated data transmitter with ambient radio-wave
  energy harvesting,'' in \emph{IEEE Symp. VLSI Circuits}, Kyoto, JP, Dec.
  2011, pp. 162--163.

\bibitem{QianGaoAmBCTWC16}
J.~Qian, F.~Gao, G.~Wang, S.~Jin, and H.~Zhu, ``Noncoherent detections for
  ambient backscatter system,'' \emph{{IEEE} Trans. Wireless Commun.}, vol.~16,
  no.~3, pp. 1412--1422, Mar. 2017.

\bibitem{WangGaoAmBCTCOM16}
G.~Wang, F.~Gao, R.~Fan, and C.~Tellambura, ``Ambient backscatter communication
  systems detection and performance analysis,'' \emph{{IEEE} Trans. Commun.},
  vol.~64, no.~11, pp. 4836--4846, Nov. 2016.

\bibitem{WiFiBackscatter14}
B.~Kellogg, A.~Parks, S.~Gollakota, J.~R. Smith, and D.~Wetherall, ``{Wi-Fi}
  backscatter: Internet connectivity for {RF}-powered devices,'' in \emph{Proc.
  of ACM SIGCOMM}, Chicago, USA, Jun. 2014, pp. 1--12.

\bibitem{TurbochargingABCSigcom14}
A.~N. Parks, A.~Liu, S.~Gollakota, and J.~R. Smith, ``Turbocharging ambient
  backscatter communication,'' in \emph{Proc. of ACM SIGCOMM}, Chicago, IL,
  USA, Aug. 2014, pp. 1--12.

\bibitem{BackFiSigcom15}
D.~Bharadia, K.~Joshi, M.~Kotaru, and S.~Katti, ``{BackFi}: High throughput
  {WiFi} backscatter,'' in \emph{Proc. of ACM SIGCOMM}, London, UK, Aug. 2015,
  pp. 283--296.

\bibitem{HitchHikeKattiSenSys16}
P.~Zhang, D.~Bharadia, K.~Joshi, and S.~Katti, ``Hitchhike: Practical
  backscatter using commodity {WiFi},'' in \emph{Proc. of ACM Conf.Embedded
  Netw. Sensor Sys.}, Stanford, CA, USA, Nov. 2016, pp. 259--271.

\bibitem{PassiveWiFiNSDI16}
B.~Kellogg, V.~Talla, S.~Gollakota, and J.~R. Smith, ``Passive {Wi-Fi}:
  Bringing low power to {Wi-Fi} transmissions,'' in \emph{Proc. of {USENIX}
  Symposium on Networked Systems Design and Implementation (NSDI)}, Santa
  Clara, CA, USA, Mar. 2016, pp. 151--164.

\bibitem{InterscatterSigcom16}
V.~Iyery, V.~Tallay, B.~Kelloggy, S.~Gollakota, and J.~R. Smith,
  ``Inter-technology backscatter: Towards internet connectivity for implanted
  devices,'' in \emph{Proc. of ACM SIGCOMM}, Florianopolis, Brazil, Aug. 2016,
  pp. 356--369.

\bibitem{FSBackscatterSigcomm16}
P.~Zhang, M.~Rostami, P.~Hu, and D.~Ganesan, ``Enabling practical backscatter
  communication for on-body sensors,'' in \emph{Proc. of ACM SIGCOMM},
  Florianopolis, Brazil, Aug. 2016, pp. 370--383.

\bibitem{HeWangCL11}
C.~He and Z.~J. Wang, ``Closed-form {BER} analysis of non-coherent {FSK} in
  {MISO} double rayleigh fading/{RFID} channel,'' \emph{{IEEE} Commun. Lett.},
  vol.~15, no.~8, pp. 848--850, Aug. 2011.

\bibitem{BoyerSumit14}
C.~Boyer and S.~Roy, ``Backscatter communication and {RFID}: coding, energy,
  and {MIMO} analysis,'' \emph{{IEEE} Trans. Commun.}, vol.~62, no.~3, pp.
  770--785, Mar. 2014.

\bibitem{Carvalho14}
N.~B. Carvalho and etc., ``Wireless power transmission: R$\&${D} activities
  within europe,'' \emph{{IEEE} Trans. Microwave Theory Tech.}, vol.~62, no.~4,
  pp. 1031--1045, Apr. 2014.

\bibitem{BoaventuraCarvalho13}
A.~J.~S. Boaventura and N.~B. Carvalho, ``Extending reading range of commercial
  {RFID} readers,'' \emph{{IEEE} Trans. Microwave Theory Tech.}, vol.~61,
  no.~1, pp. 633--640, Jan. 2013.

\bibitem{YangBackscatter15}
G.~Yang, C.~K. Ho, and Y.~L. Guan, ``Multi-antenna wireless energy transfer for
  backscatter communication systems,'' \emph{{IEEE} J. Sel. Areas Commun.},
  vol.~33, no.~12, pp. 2974--2987, Dec. 2015.

\bibitem{KimionisSahalosTCOM14}
J.~Kimionis, A.~Bletsas, and J.~N. Sahalos, ``Increased range bistatic scatter
  radio,'' \emph{{IEEE} Trans. Commun.}, vol.~62, no.~3, pp. 1091--1104, Mar.
  2014.

\bibitem{HuangWPBCN16}
K.~Han and K.~Huang, ``Wirelessly powered backscatter communication network:
  Modeling, coverage and capacity,'' to appear in {\it {IEEE} Trans. Wireless
  Commun.} [Early Access].

\bibitem{GoldsmithWC2005}
A.~Goldsmith, \emph{Wireless Communications}.\hskip 1em plus 0.5em minus
  0.4em\relax Cambridge Univ. Press, 2005.

\bibitem{BryantRFIC2012}
C.~Bryant and H.~Sj\"{o}land, ``A 2.45ghz ultra-low power quadrature front-end
  in 65nm {CMOS},'' in \emph{Proc. of {IEEE} Radio Frequency Integrated
  Circuits Symposium}, Montreal, QC, Canada, Jun. 2012, pp. 247--250.

\bibitem{BoyerRoyTWC13}
C.~Boyer and S.~Roy, ``Space time coding for backscatter {RFID},'' \emph{{IEEE}
  Trans. Wireless Commun.}, vol.~12, no.~5, pp. 2272--2280, May 2013.

\bibitem{ZhangLiangJSTSP15}
R.~Zhang and Y.-C. Liang, ``Exploiting multi-antennas for opportunistic
  spectrum sharing in cognitive radio networks,'' \emph{{IEEE} J. Select.
  Topics Signal Process.}, vol.~2, no.~1, pp. 88--102, Sept. 2008.

\bibitem{YCLiangTradeoffTWC08}
Y.-C. Liang, Y.~Zeng, E.~C.~Y. Peh, and A.~T. Hoang, ``Sensing-throughput
  tradeoff for cognitive radio networks,'' \emph{{IEEE} Trans. Wireless
  Commun.}, vol.~7, no.~4, pp. 1326--1336, Apr. 2008.

\bibitem{SMKayStatisticalSP93}
S.~M. Kay, \emph{Fundamentals of Statistical Signal Processing}.\hskip 1em plus
  0.5em minus 0.4em\relax NJ, USA: Prentice Hall, 1993.

\bibitem{WalpoleProbbook2011}
R.~E. Walpole, R.~H. Myers, S.~L. Myers, and E.~Ye, \emph{Probability and
  Statistics for Engineers and Scientists}.\hskip 1em plus 0.5em minus
  0.4em\relax Pearson, 9th edition.

\bibitem{JGProakisMSalehi05}
J.~G. Proakis and M.~Salehi, \emph{Digital Communications (5th Edition)}.\hskip
  1em plus 0.5em minus 0.4em\relax NY, USA: McGraw-Hill, 2007.

\bibitem{80211a}
IEEE Std 802.11a - 1999, Wireless LAN Medium Acccess Control (MAC) and Physical
  Layer (PHY) Specifications[S], 1999.

\bibitem{MorelliKuoProceed07}
M.~Morelli, C.-C.~J. Kuo, and M.-O. Pun, ``Synchronization techniques for
  orthogonal frequency division multiple access ({OFDMA}): A tutorial review,''
  \emph{Proceedings of the IEEE}, vol.~95, no.~7, pp. 1394--1427, Jul. 2007.

\end{thebibliography}
\bibliographystyle{IEEEtran}

\end{document}